\newcommand\nin{\not\in}
\newcommand\sym[1]{\mathsf{#1}}
\newcommand\pair[2]{\left\langle#1,#2\right\rangle}
\newcommand\set[1]{\left\{#1\right\}}
\newcommand{\setcom}[2]{\set{#1\left\vert\vphantom{#1}\,#2\right.}}
\let \toold \to
\renewcommand{\to}{\ensuremath{\toold}}
\let \lambdaold \lambda
\renewcommand{\lambda}{\ensuremath{\lambdaold}}
\newcommand{\delayapp}{\ensuremath{\circledast}}
\newcommand{\progressapp}{\ensuremath{\odot}}
\newcommand{\boxapp}{\ensuremath{\boxast}}
\newcommand{\promoteapp}{\ensuremath{\boxdot}}
\newcommand{\world}[2]{(#1 \diamond #2)}
\newcommand\allocate[1]{\sym{alloc}\left(#1\right)}
\newcommand\dom[1]{\sym{dom}\left(#1\right)}
\newcommand\ol[1]{\overline{#1}}
\newcommand\tickle{\sqsubseteq_\tick}
\newcommand\tickge{\sqsupseteq_\tick}
\newcommand\heaple{\sqsubseteq}
\newcommand\heapge{\sqsupseteq}
\newcommand\suffix{\le_{\sym{suf}}}
\newcommand{\interm}{\sym{in}}
\newcommand{\caseterm}[5]{\sym{case}\, #1 \, \sym{of}\, \interm_1\, #2 . #3 ; \interm_2\,#4 . #5}
\newcommand\unit{\ensuremath{\langle\rangle}}
\newcommand\fix{\sym{fix}}
\newcommand\tick{\checkmark}
\newcommand\lock{\sharp}
\newcommand\adv{\sym{adv}}
\newcommand\delay{\sym{delay}}
\newcommand\rbox{\sym{box}}
\newcommand\promote{\sym{promote}}
\newcommand\progress{\sym{progress}}
\newcommand\unbox[1][]{\sym{unbox}^{#1}}
\newcommand\out{\sym{out}}
\newcommand\into{\sym{into}}
\newcommand{\head}{\sym{head}}
\newcommand{\tail}{\sym{tail}}
\newcommand{\later}[1]{\sym{later}(#1)}
\newcommand{\trim}[1]{\sym{gc}(#1)}
\newcommand{\just}{\sym{just}}
\newcommand{\nothing}{\sym{nothing}}
\let \Boxold \Box
\renewcommand{\Box}{\ensuremath{\Boxold}}
\newcommand\Delay{\ensuremath{{\bigcirc}}}
\newcommand\stable{\sym{stable}}
\newcommand\Str[1]{\ensuremath{\sym{Str}(#1)}}
\newcommand\Ev[1]{\ensuremath{\sym{Ev}(#1)}}
\newcommand\Nat{\ensuremath{\sym{Nat}}}
\newcommand\Time{\ensuremath{\sym{Time}}}
\newcommand\Unit{\ensuremath{\sym{1}}}
\newcommand\Maybe[1]{\ensuremath{\sym{Maybe}(#1)}}
\newcommand\nats{{\mathbb N}}
\newcommand\vinterp[3]{\mathcal{V}\llbracket #1 \rrbracket_{#2}^{#3}}
\newcommand\einterp[3]{\mathcal{T}\llbracket #1 \rrbracket_{#2}^{#3}}
\newcommand\cinterp[3]{\mathcal{C}\llbracket #1 \rrbracket_{#2}^{#3}}
\newcommand\sem[1]{\llbracket #1 \rrbracket}
\newcommand\wfcxt[2][]{#2 \vdash_{#1} }
\newcommand\type{\sym{type}}
\newcommand\hastype[3]{#1 \vdash #2 : #3}
\newcommand\istype[2]{#1 \vdash #2: \type}
\newcommand\tokenfree[1]{\sym{token}\text{-}\sym{free}(#1)}
\newcommand\tickfree[1]{\sym{tick}\text{-}\sym{free}(#1)}
\renewcommand\state[2]{\left\langle #1;#2 \right\rangle}
\newcommand\hevalSingle[4]{ \state{#1}{#2} \Downarrow \state{#3}{#4}}
\newcommand\thel{l^*}
\newcommand\forward[2]{\stackrel{#1/#2}{\Longrightarrow}}
\newcommand\forwards[1]{\stackrel{#1}{\Longrightarrow}}
\newcommand\wlen[1]{| #1|}
\newcommand\tsize[1]{\left| #1\right|}
\newcommand{\citepos}[1]{\citeauthor{#1}'s [\citeyear{#1}]}
\begin{document}

%% Title information
\title[Simply RaTT]{Simply RaTT}

%\titlenote{with title note}             %% \titlenote is optional;
                                        %% can be repeated if necessary;
                                        %% contents suppressed with 'anonymous'
\subtitle{A Fitch-style Modal Calculus for Reactive Programming
  Without Space Leaks}                     %% \subtitle is optional
%\subtitlenote{with subtitle note}       %% \subtitlenote is optional;
                                        %% can be repeated if necessary;
                                        %% contents suppressed with 'anonymous'

%% Author information
%% Contents and number of authors suppressed with 'anonymous'.
%% Each author should be introduced by \author, followed by
%% \authornote (optional), \orcid (optional), \affiliation, and
%% \email.
%% An author may have multiple affiliations and/or emails; repeat the
%% appropriate command.
%% Many elements are not rendered, but should be provided for metadata
%% extraction tools.

%% Author with single affiliation.

\author{Patrick Bahr}
%\authornote{with author1 note}          %% \authornote is optional;
                                        %% can be repeated if necessary
%\orcid{nnnn-nnnn-nnnn-nnnn}             %% \orcid is optional
\affiliation{
 % \position{Position1}
 % \department{Department1}              %% \department is recommended
  \institution{IT University of Copenhagen}            %% \institution is required
%  \streetaddress{Street1 Address1}
%  \city{City1}
%  \state{State1}
%  \postcode{Post-Code1}
%  \country{Country1}                    %% \country is recommended
}
\email{paba@itu.dk}          %% \email is recommended

\author{Christian Uldal Graulund}
%\authornote{with author2 note}          %% \authornote is optional;
                                        %% can be repeated if necessary
%\orcid{nnnn-nnnn-nnnn-nnnn}             %% \orcid is optional
\affiliation{
%  \position{Position2a}
%  \department{Department2a}             %% \department is recommended
  \institution{IT University of Copenhagen}           %% \institution is required
%  \streetaddress{Street2a Address2a}
%  \city{City2a}
%  \state{State2a}
%  \postcode{Post-Code2a}
%  \country{Country2a}                   %% \country is recommended
}
\email{cgra@itu.dk}         %% \email is recommended

\author{Rasmus Ejlers M{\o}gelberg}
%\authornote{with author2 note}          %% \authornote is optional;
                                        %% can be repeated if necessary
%\orcid{nnnn-nnnn-nnnn-nnnn}             %% \orcid is optional
\affiliation{
%  \position{Position2a}
%  \department{Department2a}             %% \department is recommended
  \institution{IT University of Copenhagen}           %% \institution is required
%  \streetaddress{Street2a Address2a}
%  \city{City2a}
%  \state{State2a}
%  \postcode{Post-Code2a}
%  \country{Country2a}                   %% \country is recommended
}
\email{mogel@itu.dk}         %% \email is recommended

%% Abstract
%% Note: \begin{abstract}...\end{abstract} environment must come
%% before \maketitle command
\begin{abstract}
Functional reactive programming (FRP) is a paradigm for programming with
signals and events, allowing the user to describe reactive programs 
on a high level of abstraction. For this to make sense, an FRP language 
must ensure that all programs are causal, and can be implemented without
introducing space leaks and time leaks. To this end, some FRP languages
do not give direct access to signals, but just to signal functions. 

Recently, modal types have been suggested as an alternative approach
to ensuring causality in FRP languages in the synchronous case, 
giving direct access to the signal and event
abstractions. This paper presents \emph{Simply RaTT}, a new modal
calculus for reactive programming. Unlike prior calculi, Simply RaTT
uses a Fitch-style approach to modal types, which simplifies the type
system and makes programs more concise. Echoing a
previous result by Krishnaswami for a different language, we devise an
operational semantics that safely executes Simply RaTT programs
without space leaks.

We also identify a source of time leaks present in other modal FRP
languages: The unfolding of fixed points in delayed computations.
The Fitch-style presentation allows an easy way to rules out these
leaks, which appears not to be possible in the more traditional 
dual context approach. 
\end{abstract}

%% 2012 ACM Computing Classification System (CSS) concepts
%% Generate at 'http://dl.acm.org/ccs/ccs.cfm'.
\begin{CCSXML}
<ccs2012>
<concept>
<concept_id>10011007.10011006.10011008</concept_id>
<concept_desc>Software and its engineering~General programming languages</concept_desc>
<concept_significance>500</concept_significance>
</concept>
<concept>
<concept_id>10003456.10003457.10003521.10003525</concept_id>
<concept_desc>Social and professional topics~History of programming languages</concept_desc>
<concept_significance>300</concept_significance>
</concept>
</ccs2012>
\end{CCSXML}

\ccsdesc[500]{Software and its engineering~General programming languages}
\ccsdesc[300]{Social and professional topics~History of programming languages}
%% End of generated code

%% Keywords
%% comma separated list
\keywords{Functional reactive programming, Modal types, Synchronous data flow languages, Type systems}  
%% \keywords are mandatory in final camera-ready submission

%% \maketitle
%% Note: \maketitle command must come after title commands, author
%% commands, abstract environment, Computing Classification System
%% environment and commands, and keywords command.
\maketitle
%
%\begin{epigraphs}
%  \qitem{\textit{%
%      If you don’t know me by now,\\
%      you will never never never know me.}}%
%  {Simply Red}
%  \qitem{\textit{%
%      You should know by now.\\
%      You should know by now.\\
%      You should know by now.\\
%      You should know by now.}}%
%  {Ratt}
%\end{epigraphs}

\section{Introduction}
\label{sec:introduction}

Reactive programs are programs that engage in an ongoing dialogue with their environment, 
taking inputs and producing outputs, typically dependent on an internal state. Examples include
GUIs, servers, and control software for components in cars, aircraft, and robots.
These are traditionally
implemented in imperative programming languages using often complex webs of components 
communicating through callbacks and shared state. As a consequence, reactive programming in imperative 
languages is error-prone and program behaviour difficult to reason about. This is unfortunate since
many of the most safety-critical programs in use today are reactive. 

The idea of Functional Reactive Programming (FRP)~\cite{FRAN} is to
bring reactive programming into the functional paradigm by providing
the programmer with abstractions for describing the dataflow between
components in a simple and direct way. At the same time, this should
give the usual benefits of functional programming: Modular programming using
higher-order functions, and simple equational reasoning.
%
%We can write
%programs more modularly using higher-order functions, and we may
%readily use simple equational reasoning techniques.  
The abstractions
provided by the early FRP languages were \emph{signals} and
\emph{events}: A signal of type $A$ is a time-varying value of type
$A$, and an event of type $A$ is a value of type $A$ appearing at some
point in time.  The notion of time is abstract, but can, depending on
the application, be thought of as either discrete or continuous.

For such high-level abstractions to make sense, the language designer must ensure that all programs 
can be executed in an efficient way. A first problem is ensuring \emph{causality}, i.e., the property that
the value of output signals at a given time only depends on the values read from input signals before
or at that time. For example, implementing signals in the discrete time case simply as streams will break this
abstraction, as there are many non-causal functions from streams to streams. Another issue is 
\emph{time leaks}, i.e., the problem of programs exhibiting gradually slower response time, typically due
to intermediate values being recomputed whenever output is needed. The related notion of 
\emph{space leaks} is the problem of programs holding on to memory while continually allocating more 
until they eventually run out of memory. 
%This typically happens when input data is stored indefinitely. 

A good language for FRP should only allow programmers to write causal functions. On the other hand, 
in expressive programming languages some of the responsibility for avoiding the problems of space 
and time leaks must be left to the programmer. For example, if the language has linked lists, a programmer 
could write a function that stores all input in a list, leading to a space leak. We will refer to this as an 
\emph{explicit} space leak, since it can be detected from the code. A good FRP language should
avoid \emph{implicit} space and time leaks, i.e., leaks that are caused by the language implementation, and 
so are out of the programmers control. 

Due to these concerns, newer libraries and languages for FRP do not
give the programmer direct access to signal and event types. For
example, Arrowised FRP~\cite{nilsson2002} has a primitive notion of
signal functions and provides combinators for combining these to
construct dataflow networks statically, along with switching operators
for dynamically changing these networks. This approach sacrifices some
of the simplicity and flexibility of the original suggestions for FRP,
and the switching combinators have an ad hoc flavour. Moreover, to the
best of our knowledge, no strong guarantees concerning space or time
leaks have been proved in this setting.

\subsection{Modal FRP calculi}

Recently, a number of authors 
(\cite{jeltsch2013temporal,jeffrey2012,jeffrey2014,krishnaswami2011ultrametric,krishnaswami13frp}) 
have suggested using modal types for functional reactive programming. These all work in the synchronous
case of time being given by a global clock.
With this assumption, the resulting languages can be thought of as extensions of 
synchronous dataflow languages such as  
Lustre~\cite{caspi1987lustre}, and Lucid Synchrone~\cite{pouzet2006lucid} with higher-order functions 
and operations for dynamically changing the dataflow network. This restricted setting covers many applications
of FRP, and in this paper we shall restrict ourselves to that as well. Since continuous time can be simulated
by discrete time (see \autoref{sec:related-work}), we will further restrict ourselves to discrete time. 

Under the assumption of a global discrete clock 
a signal is simply a stream. Causality is ensured by using a modal type operator
$\Delay$ to encode the notion of a time step in types: A value of type $\Delay A$ is a computation
returning a value of type $A$ in the next time step. Using $\Delay$, one can describe the streams corresponding
to signals as a type satisfying the type isomorphism $\Str{A} \cong A \times
\Delay\Str{A}$, capturing the fact that the tail of the stream is only available in the next time step. 
Streams and programs processing streams can be defined recursively using the guarded
fixed point combinator of~\citet{nakano2000} taking input of type $\Delay A \to A$ and producing elements of type $A$ as output. 

The most advanced programming language of this kind, in terms of operational semantics with
run-time guarantees, is that of~\citet{krishnaswami13frp}. This language extends the simply typed 
lambda calculus with two modal type operators: The $\Delay$ mentioned above, as well as one 
for classifying stable, i.e., time-invariant data. Unlike the arrowised approach to FRP, Krishnaswami's 
calculus gives direct access to streams as a data type which can even be nested to give streams of 
streams. Other important data types, such as events can be
encoded using \emph{guarded recursive types}, a concept also stemming from~\citet{nakano2000}.

Krishnaswami's calculus has an operational semantics for 
evaluating terms in each step of the global clock, and this can be extended to a step-by-step 
evaluation of streams. The language is total in the sense that each step evaluates to a value in finite
time (a property often referred to as \emph{productivity}). 
The operational semantics evaluates by storing delayed computations on a heap, 
and Krishnaswami shows that all heap data can be 
safely garbage collected after each evaluation step, effectively guaranteeing the absence of (implicit) space leaks. 

\subsection{Fitch-style modal calculi}

Like most modal calculi, Krishnaswami's calculus uses let-expressions to program with modalities. 
This affects the programming style: Many programs consist of a long series of unpacking statements, essentially giving
access to the values produced by delayed computations in the next time step, 
followed by relatively short expressions manipulating these.
While this can be to a large extent be dealt with using syntactic sugar, it has a more fundamental problem, which is
harder to deal with: It complicates equational reasoning about programs. This is an important issue, since simple 
equational reasoning is supposed to be one of the benefits of functional programming. Our long-term goal is to design
a dependent type theory for reactive programming in which programs have operational guarantees like the ones
proved by Krishnaswami, and where program specifications can be expressed using dependent types. Introducing 
let-expressions in terms will lead to let-expressions also in the types, which is a severe complication of the type theory. 

Fitch-style modal calculi~\cite{Fitch:Symbolic,clouston2018fitch} are an alternative approach to modal types not using
let-expressions. Instead, elements of modal types are constructed by abstracting tokens from the context, 
and modal operators are likewise eliminated by placing tokens in the context. Recent research in 
guarded dependent type theory~\cite{bahr2017clocks,clouston2018modal} 
has shown the benefit of this approach also for dependent types. Guarded
dependent type theory is an extension of Martin-L{\"o}f type theory~\cite{martin1984intuitionistic} 
with a delay modality reminiscent of the $\Delay$
used in modal FRP together with Nakano's fixed point combinator also mentioned above. In this setting, the token used
in the Fitch-style approach is thought of as a 'tick' -- evidence that time has passed -- which can be used to open a
delayed computation. Using ticks, one can prove properties of guarded recursive programs in a compellingly
simple way.

\subsection{Simply RaTT}

In this paper, we present Simply Typed Reactive Type Theory (Simply RaTT) a simply typed 
calculus for reactive programming based on the Fitch-style approach to modal types. This is a first step towards our goal 
of a dependently typed theory for reactive programming (RaTT), but already the simply typed version offers several benefits
over existing approaches. Compared to Krishnaswami's calculus, Simply RaTT has a
significantly simpler type system. The Fitch-style approach eliminates the need for the qualifiers
'$\sym{now}$', '$\sym{later}$' and '$\sym{stable}$' used in Krishnaswami's calculus
on variables and term judgements. Similar (but not quite the same) qualifiers can be derived from 
the position of variables relative to tokens in contexts in Simply RaTT. 
Moreover, we eliminate the need for allocation tokens, 
a technical tool used by Krishnaswami to control heap allocation. This, together with the Fitch-style 
typing rules makes programs shorter and (we believe) more readable than in Krishnaswami's 
calculus. 

%
%
%
%the Fitch-style introduction and elimination rules for modal types makes programs shorter
%and simpler than in Krishnaswami's calculus, \rasmus{Maybe not more readable? Am I overselling this point? What other points
%do we have?}  a point amplified by the fact that we eliminate the need for allocation tokens, a technical tool used by 
%Krishnaswami to control heap allocation. 

Compared to the standard approach to modal types, the Fitch-style used here is based on a shift in time-dependence. Whereas
terms in Krishnaswami's language can look into the future (since now-terms can depend on later-variables), terms in
Simply RaTT can only look into the past (since later-expressions can depend on now-variables). This explains how let-expressions
are eliminated: There is no need to refer to the values produced in the future by delayed computations. Instead, Simply RaTT allows
delayed computations from the past to be run in the present. 

We prove a garbage collection result similar to that proved by Krishnaswami, and show
how this can be used to construct a safe evaluation strategy for stream transducers written 
in our language. Input to stream transducers are treated as delayed computations, and therefore
stored in a heap and garbage collected in the next time step. 

We also identify and eliminate a source of time leaks present in previous approaches. This 
is best illustrated by the following two implementations of the stream of natural numbers written
in Haskell-notation:
\begin{center}
  \vspace{-1em}
  \begin{minipage}[t]{0.45\linewidth}
    \haskellLeakyNats
  \end{minipage}
  \begin{minipage}[t]{0.45\linewidth}
    \haskellSafeNats
  \end{minipage}
\end{center}
On most machines (some compilers
may use clever techniques to detect this problem), the evaluation of the $n$th element of 
$\Varid{leakyNats}$ will not use the previously computed values, but instead compute it using $n$ 
successive applications of $\Varid{suc}$, resulting in a time leak. This is indeed what happens on 
Krishnaswami's machine and also the machine of this paper. Contrary to that, the $\Varid{nats}$ 
example uses an internal state declared explicitly in the type of $\Varid{from}$ to 
maintain a constant evaluation time for each step. 
In this paper we identify the source of the time leak to be the ability to unfold fixed points in delayed
computations, and use this to eliminate examples such as
$\Varid{leakyNats}$ in typing. The ability to control when unfolding
of fixed points are allowed relies crucially on the Fitch-style
presentation, and it is very unclear whether a similar restriction can
be added to the traditional dual context presentation. 

The calculus is illustrated through examples showing how to  implement a small FRP library as well
as how to simulate the most basic constructions of Lustre in Simply RaTT. Examples are also used
to illustrate our abstract machine for evaluating streams and stream transducers. 

\subsection{Overview of paper}

The paper is organised as follows: \autoref{sec:overview-language} gives an overview of the language
introducing the main concepts and their intuitions through examples. \autoref{sec:oper-semant} defines the 
operational semantics, including the evaluation of stream transducers and states the garbage collection results for 
these. \autoref{sec:stream_library} shows how to implement a small library for reactive programming in Simply
RaTT and \autoref{sec:lustre} shows how to encode the most basic constructions of the synchronous dataflow language Lustre in Simply RaTT. 
\autoref{sec:metatheory} sketches the proof of our garbage collection result. Finally, \autoref{sec:related-work} 
describes related work and \autoref{sec:concl-future-work} concludes and describes future work.

\section{Simply RaTT}
\label{sec:overview-language}

This section gives an overview of the Simply RaTT language. 
The complete formal description of the syntax of the language, 
and in particular the typing rules, can be found in \autoref{fig:syntax}, \autoref{fig:types},
and \autoref{fig:typing}.

The type system of Simply RaTT extends that of the simply typed 
lambda calculus with
two modal type operators: $\Delay$ for classifying delayed
computations, and $\Box$ for classifying stable computations, i.e.,
computations that can be performed safely at any time in the
future. We start by describing the constructions for $\Delay$.

Data of type $\Delay A$ are \emph{computations} that produce data of
type $A$ %available
in the next time step. To perform such a computation we must wait a
time step, as represented in typing judgements by the addition of a
$\tick$ (pronounced 'tick') in the context. More precisely, the typing
rule for eliminating $\Delay$ states that if
$\hastype{\Gamma}{t}{\Delay A}$ then
$\hastype{\Gamma,\tick,\Gamma'}{\adv(t)}{A}$. The $\tick$ in the
context of $\adv(t)$ should be thought of as separating variables in
time: Those in $\Gamma$ are available one time step before those in
$\Gamma'$.  Since there can be at most one $\tick$ in a context, we
will refer to these times as 'now' and 'later'. The typing assumption
on $t$ states that it has type $\Delay A$ now, and the conclusion
states that $\adv(t)$ has type $A$ later. The constructor for
$\Delay A$ states that if $\hastype{\Gamma,\tick}{t}{A}$, i.e., if $t$
has type $A$ later, but depends only on variables available now, then
it can be turned into a \emph{thunk} $\delay(t)$ of type $\Delay A$
now.

Note that terms in `later' judgements can refer to variables available
now as well as later, but 'now' judgements can only refer to variables
available now. This separates the Fitch-style approach of Simply RaTT
from the traditional dual context approach to calculi with modalities,
such as \citepos{krishnaswami13frp} 
modal calculus for reactive
programming. The latter also has a distinction between `later' and
`now', but the time dependencies work the opposite way: A
later-judgement can only depend on later-variables, whereas a
now-judgement can depend on both now- and later-variables.

Data of type $\Box A$ are time invariant \emph{computations} that
produce data of type $A$. That is, these computations can be executed
safely at any time in the future. To allow time invariant computations
to depend on initial data, that is, data available before the reactive program starts 
executing, contexts may contain a $\lock$
%(pronounced `lock' \todo{or is it?}) 
separating the context into \emph{initial
variables} (those to the left of $\lock$) and \emph{temporal variables} to the
right of $\lock$. There can be at most one $\lock$ in a context, and
$\Gamma, \tick$ is only well-formed if there is a $\lock$ in
$\Gamma$. Thus $\tick$ separates the temporal variables into now and
later.  We refer collectively to $\tick$ and $\lock$ as \emph{tokens}. Judgements
in a token-free context is referred to as an initial judgement. The three
kinds of judgements are summarised in \autoref{fig:judgement}.

\begin{figure}
  \begin{subfigure}[b]{0.3\linewidth}
    \[\Gamma \vdash t : A\]
    \caption{Initial judgement}
  \end{subfigure}
  \begin{subfigure}[b]{0.3\linewidth}
    \[\Gamma,\lock,\Gamma_N \vdash t : A\]
    \caption{Now judgement}
  \end{subfigure}
  \begin{subfigure}[b]{0.3\linewidth}
    \[\Gamma,\lock,\Gamma_N,\tick,\Gamma_L \vdash t : A\]
    \caption{Later judgement}
  \end{subfigure}
  \caption{The different type judgement forms. In these, the contexts $\Gamma$, $\Gamma_N$ and $\Gamma_L$ are assumed to be 
  token-free and contain variables referred to as initial variables,
  now-variables and, later-variables.}
  \label{fig:judgement}
\end{figure}

%ensure that such stable
%computations do not implicitly depend on temporal computations, the
%context may contain a $\lock$ to separate variables that are stable
%from those that may depend on the passage in time. More precisely, the
If $\hastype{\Gamma, \lock}t{A}$ then $t$ does not depend on any
temporal data, and can thus be thunked to a time invariant computation
$\hastype{\Gamma}{\rbox (t)}{\Box A}$
to be run at a later time.
The typing rule for eliminating $\Box$ states that if
$\hastype{\Gamma}{t}{\Box A}$  and $\Gamma'$ is token-free, then
$\hastype{\Gamma,\lock,\Gamma'}{\unbox(t)}{A}$. 
The restriction on $\Gamma'$ means that we can only run the time invariant
computation $t$ now, not later. 
This may seem to contradict the intuition for $\Box A$ given above,
but is needed to rule out certain time leaks as we shall
see below. Time invariant computations can still be run at arbitrary times 
in the future through the use of fixed points.

Both these modal type operators have restricted forms of applicative actions. 
In the case of $\Delay$, if $\hastype\Gamma t{\Delay (A \to B)}$ and 
$\hastype{\Gamma}u{\Delay A}$ then $\hastype\Gamma{t \delayapp u}{\Delay B}$ 
is defined as 
\[
t \delayapp u = \delay(\adv(t) (\adv(u))). 
\]
Note that this is only well-typed
if $\Gamma$ contains $\lock$ but not $\tick$, since the subterm
$\adv(f) (\adv(x))$ must be typed in context $\Gamma, \tick$,
and by the restrictions mentioned above, this is only a well-formed context
if $\lock$ is the only token in $\Gamma$. Similarly,
if $\hastype\Gamma t{\Box (A \to B)}$ and 
$\hastype{\Gamma}u{\Box A}$ then $\hastype\Gamma{t \boxapp u}{\Box B}$ is
defined as $\rbox(\unbox(t) (\unbox(u)))$. As above, this is only well-typed if $\Gamma$
is token-free. Note that neither $\Box$ nor $\Delay$ are applicative
functors in the 
sense of~\citet{mcbride2008applicative}, since there are generally no maps $A \to \Box A$,
nor $A \to \Delay A$. The former would force computations to be stable, and the latter
would push data into the future, which is generally unsafe as it can lead to space leaks.
This restriction is enforced in the type theory in the variable introduction rule, which does not allow
variables to be introduced over tokens. As a consequence, weakening of typing judgements 
with tokens is not admissible. An exception to this exists for the \emph{stable} types, 
as we shall see below. 

\subsection{Fixed points}
\label{sec:fixed-points}

Reactive programs can be defined recursively using a fixed point combinator.
To ensure productivity and causality, the recursion variable must be a delayed computation.
Precisely, the rule for fixed points state that if ${\hastype{\Gamma,\lock,x : \Delay A}{t}{A}}$ then
${\hastype{\Gamma}{\fix \; x.t}{\Box A}}$. These guarded recursive fixed points can be used to 
program with guarded recursive types such as guarded recursive streams
$\Str{A}$ satisfying the type isomorphism $\Str{A} \cong A \times
\Delay\Str{A}$. Terms of this type compute to an element in $A$ (the head) now, and a 
delayed computation of a tail. We will use $\mathbin{::}$ as infix notation for the right to left
direction of the isomorphism, i.e., $t\mathbin{::}u$ is a shorthand
for $\into\pair t u$. Given $t: A$ and $u : \Delay \Str{A}$, we thus
have $t\mathbin{::}u : \Str A$. 

As a simple example of a recursive definition, the stream of all zeros can be defined as
\zerosExample
Note that fixed points are time invariant in the sense of having a type of the form 
$\Box A$. This is because they essentially need to call themselves in the future. For this
reason, their definition cannot depend on temporal data, as can be seen from the typing rule, since
$x$ must be the only temporal variable in $t$.

As a second example of a recursively defined function, we define a map function for guarded streams.
This should take a function $A \to B$ as input and a stream of type $\Str{A}$ and produce a stream
of type $\Str{B}$. Since the input function will be called repeatedly at all future time steps it needs to
be time-invariant, and can be defined as:
  \mapExampleUgly
where $\head$ and $\tail$ compute the head and the tail of a stream,
respectively.
%, i.e.,
%\[
%  \head\,t = \pi_1\,(\out\,t)\qquad
%  \tail\,t = \pi_2\,(\out\,t)
%\]

For readability we introduce the following syntax for defining fixed
points such as $\Varid{map}$:
  \mapExample
This should be read as defining the term to the left of $\lock$ as a
fixed point and in particular it allows us to write pattern matching
in a simple way. When type checking the right-hand side of this
definition, $\Varid{map}\, f$ should be given type
$\Delay(\Str{A} \to \Str{B})$ because it represents the recursion
variable. Any such definition can be translated syntactically to our
core language in a straightforward manner: Pattern matching is
translated to the corresponding elimination forms ($\pi_i$,
$\sym{case}$, $\out$) and the recursion syntax with $\lock$ is
translated to $\fix$. % \rasmus{Give general construction?}

The type of guarded streams defined above is just one example of a guarded recursive type. Simply RaTT includes 
a construction for general recursive types $\mu\alpha.A$ satisfying type isomorphisms of the form
$\mu\alpha.A \cong A[\Delay(\mu\alpha.A)/\alpha]$. In these $\alpha$ can appear everywhere in $A$, including 
non-strictly positive and even negative positions. 
%We do not, however, at the moment have any interesting applications
%of this generality. 
Another example of a guarded recursive type is that of events defined as 
$\Ev{A} = \mu \alpha . A + \alpha$,  and thus
satisfying $\Ev{A} \cong A + \Delay \Ev{A}$. 
Streams and events form the building blocks of functional reactive programming. %~\cite{FRAN}.
Similarly to streams, one can define a map function for events using fixed points as follows
  \genMapE
where we write $\sym{val}\,t$ and $\sym{wait}\,t$ instead of
$\into\,(\interm_1\,t)$ and $\into\,(\interm_2\,t)$, respectively.

% For more examples on how to encode libraries for programming with streams and events, see Section~\ref{sec:stream_library}.

\subsection{Stable types}
\label{sec:stable-types}

Next we show how to define the stream of natural numbers using a helper function
mapping a natural number $n$ to the stream $(n, n+1 , n + 2,
\dots)$. A first attempt at defining $\Varid{from}$ could look as follows: 
  \fromExampleBad
is not well typed, because to type $\delay(n+1)$ the term
$n+1$ must have type $\Nat$ \emph{later}, but $n$ is a \emph{now}-variable. The number $n$ therefore must
be kept for the next time step, an operation that generally is unsafe, because general values can have references to
temporal data. For example, a value of type $\Delay\Str{A}$ in our machine
%to be presented in Section~REF 
is a reference to the tail of a stream, which could be an input stream. Allowing such values to
be kept for the next step can lead the machine to store input data indefinitely, causing space leaks.
%allowing values of type $\Conid{Str}(A)$ to be kept for the next 
%time step may force the machine to store input streams indefinitely, causing space leaks. 
Similarly, values of function types
can contain references to time dependent data in closures and should therefore not be kept. 
On the other hand, a value of type natural numbers cannot contain
such references and so can safely be kept for the next time step. We say that $\Nat$ is a \emph{stable} type, and a 
grammar for these stable types is given in \autoref{fig:types}. Data of stable type can be kept one time step using the construction 
$\progress$ which 
allows a now-judgement of the form ${\hastype{\Gamma}{t}{A}}$ to be transformed to a later judgement of the form  
${\hastype{\Gamma,\tick,\Gamma'}{\progress\,t}{A}}$ if $\Gamma$ contains a $\lock$ and no $\tick$ and if $A$ is stable. In 
our operational semantics, $\progress\,t$ evaluates by evaluating $t$ to a value now pushing the result to the future. 
Postponing the evaluation of $t$ would be unsafe, since terms of stable types, unlike values of stable types, 
can refer to temporal data. 
Similarly, $\promote$ can be used to make stable initial data available in temporal judgements. 

We introduce the constructions $\progressapp$, defined as $t \progressapp u = \delay(\adv(t) (\progress\, u))$, and $\promoteapp$, defined as $t \promoteapp u = \rbox(\unbox(t) (\promote\, u))$,  with derived typing rules
\begin{mathpar}
      \inferrule*
      {\hastype{\Gamma}{t}{\Delay(A \to B)} \\
        \hastype{\Gamma}{u}{A} \\ \wfcxt{\Gamma,\tick} \\ 
        A \; \stable }
      {\hastype{\Gamma}{t \progressapp u}{\Delay B}}
      \and
       \inferrule*
      {\hastype{\Gamma}{t}{\Box(A \to B)} \\
        \hastype{\Gamma}{u}{A} \\ \wfcxt{\Gamma,\lock} \\
        A \; \stable }
      {\hastype{\Gamma}{t \promoteapp u}{\Box B}}
\end{mathpar}
Using this, $\Varid{from}$ and $\Varid{nats}$ can be defined as follows\\
\begin{minipage}{0.5\linewidth}
   \fromExample
\end{minipage}
\begin{minipage}{0.5\linewidth}
    \natsExample
\end{minipage}

%we must have $\hastype{\lock\, n : \Conid{Nat}\, \tick}{n+1}{\Conid{Nat}$, i.e., 

Many programming languages would also allow $\Varid{nats}$ to be defined directly as fixed point as 
$\Varid{leakyNats}\mathrel{=}\mathrm{0}\mathbin{::}\Varid{map}\;(\mathbin{+}\mathrm{1})\;\Varid{leakyNats}{}$.
%\rasmus{Copy leaky nats example}.
%$nats'  = 0 \mathbin{::} map(suc)(nats')$,  where $\Varid{suc}$ is the successor function.
In Simply RaTT, however, such a definition would not be well typed, because the term $\Varid{map}(\rbox(+1))$
of type $\Box(\Str{\Nat} \to \Str{\Nat})$ would have to be unboxed in a context with a $\tick$
in order to type a term like
\[
\Varid{leakyNats} \, \lock = 0 \mathbin{::} \delay(\unbox(\Varid{map}(\rbox(+1)))) \delayapp \Varid{leakyNats}
\]
and this is not allowed according to the typing rule for $\unbox$.
We believe such a definition should be ruled out
because it leads to time leaks as explained in the introduction.
%when computing the $n$th element of this stream, the previous results are not used, and instead, the 
%head of $(\Varid{map}(\Varid{suc}))^n(\Varid{nats'})$ is computed. This takes time linear in $n$ and thus results in a time leak. 
This indeed happens on the machined described in \autoref{sec:oper-semant} as well as the machine 
of \citet{krishnaswami13frp}. 

The time leak in the $\Varid{leakyNats}$ example above happens because
the fixed point definition of $\Varid{map}$ is unfolded in a delayed
term, allowing the term to be evaluated to grow for each iteration. In
the $\Varid{nats}$ example, on the other hand, the recursive
definition uses a state, namely the input to $\Varid{from}$, to avoid
repeating computations. Moreover, this state usage is essentially
declared in the type of $\mathit{from}$.

For similar reasons, the $\mathit{scary\_const}$ example of
\citet{krishnaswami13frp} in which all data from an input stream is
kept indefinitely by explicitly storing it in a stream of streams
cannot be typed in Simply RaTT. An implementation of
$\mathit{scary\_const}$ in Simply RaTT would require an explicit state
that stores all previous elements from the input stream. That could be
achieved by extending the language to include a list type
$\sym{List}\, A$, and defining that $\sym{List}\,A$ is stable if $A$
is. The fact that the memory usage of $\mathit{scary\_const}$ is
unbounded is then reflected by the fact that the state of type
$\sym{List}\, A$ that is needed for $\mathit{scary\_const}$ is unbounded
in size.

Note that we make crucial use of the Fitch-style presentation to rule
out $\Varid{leakyNats}$. In the more traditional dual context approach of
\citet{krishnaswami13frp}, it does not seem possible to have a similar
restriction on unfolding of fixed points. The difference is that
Simply RaTT ``remembers'' when we are under a delay whereas that
information is lost in the system by \citet{krishnaswami13frp}.
In the example of $\Varid{leakyNats}$, the leak stems from the call of 
$\Varid{map}$ in the tail, which in Krishnaswami's system is typed as a
regular now judgement, and thus cannot be prevented.

%\rasmus{Maybe this is not the right place to say this.}
%\rasmus{Is there some intuition we can give for the fact that fixed points must be unfolded now, not later?}

\subsection{Function types}
\label{sec:function-types}

%The operations $\delay$ and $\adv$ can be understood as computational effects. 
The operational
semantics of Simply RaTT uses a heap for delayed computations as well as input streams. The 
operation $\delay(t)$ stores the computation $t$ on the heap and $\adv$ 
retrieves a delayed computation from the heap and evaluates it.
In this sense, $\delay$ and $\adv$ can be understood as computational effects. 

%evaluates $t$ to a reference \rasmus{label?} and reads it from the 
%heap. When processing an input stream, the input stream is represented by a reference to the 
%heap containing the head of the stream and a dangling reference to its tail, not yet on the heap. 

Our main result (\autoref{thr:lrl}) states that delayed computations and input data on the heap can be safely 
garbage collected after each computation step. This result relies crucially on the property that
open terms typed in now-judgements cannot retrieve delayed computations from the heap. 
One reason for this is that such terms can not  
contain $\adv$ unless under $\delay$. To maintain this invariant also for function calls, 
function types $A \to B$ are restricted
to functions with no retrieve effects. For this reason, functions may not
be constructed in a later-judgement. Later-variables can still be 
used for case-expressions, 
and so are included in Simply RaTT. The language could be 
extended with an extra function type with read effects, constructed by abstracting later-variables in 
later-judgements, but we found no use for this in our examples. 

For example, we can define a function reading an input stream and returning a stream of functions
as follows 
  \incrExample
and apply streams of functions as follows
  \streamAppExample

On the other hand, allowing lambda abstraction of later-variables
would type the following (rather contrived) stream definition
$\Varid{leaky}$, which breaks the safety of the garbage collection
strategy:
  \leakyExample
In particular, in the definition of $\Varid{leaky'}$ we use $\adv$ on
the recursive call of $\Varid{leaky'}$ inside a function. The problem
here is that the function body only gets evaluated when applied to an
argument. However, this application happens too late -- at a time
where the recursive call to $\Varid{leaky'}$ has already been garbage
collected (cf.\ \autoref{sec:counterexamples}).

%\rasmus{should there be let-bindings of later-variables? at the moment they can only be used in case expressions}

\begin{figure}
  \[
    \arraycolsep=1.4pt
    \begin{array}{l@{\quad}lcl}
    \text{Types} &A,B &::=\; &A \mid  \Unit \mid \Nat \mid A \times B \mid A + B \mid A \to B \mid \Delay A \mid \Box A \mid \mu \alpha. A\\
    \text{Values} &v,w &::=\; &\unit \mid \bar{n} \mid \lambda x.t \mid
                              \pair{v}{w} \mid \interm_i\, v \mid \; \rbox\,t \mid \into\,v \mid  \fix  \; x.t \mid l\\
  \text{Terms} &s,t &::=\; &\unit \mid \bar{n} \mid \lambda x.t \mid
                              \pair{s}{t} \mid \interm_i\, t \mid \;
                           \rbox\,t \mid \into\,t \mid  \fix  \; x.t
                           \mid l \mid x \mid t_1\,t_2 \mid t_1 + t_2 \mid \adv\,t \\
  & & \mid &  \delay\,t \mid \caseterm{t}{x}{t_1}{x}{t_2} \mid \unbox\,t \mid \progress\,t \mid \promote\,t \mid \out\,t 
\end{array}
\]
  \caption{Syntax.}
  \label{fig:syntax}
\end{figure}

\begin{figure}[tbp]
  \textsc{Well-formed types\quad $\istype{\Theta}{A}$}
  \begin{mathpar}
  \inferrule*
  {\alpha \in \Theta}
  {\istype{\Theta}{\alpha}}
  \and 
  \inferrule*
  {~}
  {\istype{\Theta}{\Unit}}
  \and
  \inferrule*
  {~}
  {\istype{\Theta}{\Nat}}
  \and 
  \inferrule*
  {\istype{\Theta}{A} \\ \istype{\Theta}{B}}
  {\istype{\Theta}{A \times B}}
  \and
  \inferrule*
  {\istype{\Theta}{A} \\ \istype{\Theta}{B}}
  {\istype{\Theta}{A + B}}
  \and
  \inferrule*
  {\istype{\Theta}{A} \\ \istype{\Theta}{B}}
  {\istype{\Theta}{A \to B}}
  \and
  \inferrule*
  {\istype{\Theta}{A}}
  {\istype{\Theta}{\Delay A}}
  \and
  \inferrule*
  {\istype{\Theta}{A}}
  {\istype{\Theta}{\Box A}}
  \and
  \inferrule*
  {\istype{\Theta,\alpha}{A}}
  {\istype{\Theta}{\mu\alpha.A}}
\end{mathpar}
\\\vspace{1em}
\textsc{Well-formed contexts\quad $\wfcxt{\Gamma}$}
\begin{mathpar}
  \inferrule*
  {~}
  {\wfcxt{\emptyset}}
  \and
  \inferrule*
  {\wfcxt{\Gamma}\\\istype{}{A}}
  {\wfcxt{\Gamma,x:A}}
  \and
  \inferrule*
  {\wfcxt{\Gamma} \\ \tokenfree{\Gamma}}
  {\wfcxt{\Gamma,\lock}}
  \and
  \inferrule*
  {\wfcxt{\Gamma} \\ \tickfree{\Gamma} \\ \lock \in \Gamma}
  {\wfcxt{\Gamma,\tick}}
\end{mathpar}
\\\vspace{1em}
\textsc{Stable types \quad $A\; \stable$}
\begin{mathpar}
  \inferrule*
  {~}
  {\Unit \; \stable}
  \and
  \inferrule*
  {~}
  {\Nat \; \stable}
  \and
  \inferrule*
  {~}
  {\Box A \; \stable}
  \and
  \inferrule*
  {A \; \stable \\ B \; \stable}
  {A \times B \; \stable}
  \and
  \inferrule*
  {A \; \stable \\ B \; \stable}
  {A + B \; \stable}
\end{mathpar}

\caption{Context and type formation rules for Simply RaTT.}
\label{fig:types}
\end{figure}

\begin{figure}[tbp]
  \begin{mathpar}
      \inferrule*
      {\wfcxt{\Gamma,x:A,\Gamma'} \\ \tokenfree{\Gamma'}}
      {\hastype{\Gamma,x:A,\Gamma'}{x}{A}}
      \and
      \inferrule*
      {\wfcxt{\Gamma}}
      {\hastype{\Gamma}{\unit}{\Unit}}
      \and
      \inferrule*
      {n \in \nats}
      {\hastype{\Gamma}{\bar{n}}{\Nat}}
      \and
      \inferrule*
      {\hastype{\Gamma}{s}{\Nat} \\ \hastype{\Gamma}{t}{\Nat}}
      {\hastype{\Gamma}{s + t}{\Nat}}
      \and
      \inferrule*
      {\hastype{\Gamma,x:A}{t}{B}\\\tickfree{\Gamma}}
      {\hastype{\Gamma}{\lambda x .t}{A \to B}}
      \and
      \inferrule*
      {\hastype{\Gamma}{t}{A \to B} \\
        \hastype{\Gamma}{t'}{A}}
      {\hastype{\Gamma}{t\,t'}{B}}
      \and
      \inferrule*
      {\hastype{\Gamma}{t}{A} \\
        \hastype{\Gamma}{t'}{B}}
      {\hastype{\Gamma}{\pair{t}{t'}}{A \times B}}
      \and
      \inferrule*
      {\hastype{\Gamma}{t}{A_1 \times A_2} \\ i \in \{1, 2\}}
      {\hastype{\Gamma}{\pi_i\, t}{A_i}}
      \and
      \inferrule*
      {\hastype{\Gamma}{t}{A_i} \\ i \in \{1, 2\}}
      {\hastype{\Gamma}{\interm_i\, t}{A_1 + A_2}}
      \and
      \inferrule*
      {\hastype{\Gamma,x: A_i}{t_i}{B}
        \\ \hastype{\Gamma}{t}{A_1 + A_2}
        \\ i \in \{1,2\}}
      {\hastype{\Gamma}{\caseterm {t}{x}{t_1}{x}{t_2}}{B}}
      \and
      \inferrule*
      {\hastype{\Gamma,\tick}{t}{A}}
      {\hastype{\Gamma}{\delay\,t}{\Delay A}}
      \and
      \inferrule*
      {\hastype{\Gamma}{t}{\Delay A} \\ \wfcxt{\Gamma,\tick,\Gamma'}}
      {\hastype{\Gamma,\tick,\Gamma'}{\adv\,t}{A}}
      \and
      \inferrule*
      {\hastype{\Gamma}{t}{\Box A} \\ \tokenfree{\Gamma'}}
      {\hastype{\Gamma,\lock,\Gamma'}{\unbox\,t}{A}}
      \and
      \inferrule*
      {\hastype{\Gamma,\lock}{t}{A}}
      {\hastype{\Gamma}{\rbox\,t}{\Box A}}
      \and
      \inferrule*
      {\hastype{\Gamma}{t}{A} \\ \wfcxt{\Gamma,\tick,\Gamma'} \\ A \; \stable}
      {\hastype{\Gamma,\tick,\Gamma'}{\progress\,t}{A}}
      \and
      \inferrule*
      {\hastype{\Gamma}{t}{A} \\ \wfcxt{\Gamma,\lock,\Gamma'} \\ A \; \stable}
      {\hastype{\Gamma,\lock,\Gamma'}{\promote\,t}{A}}
      \and
      \inferrule*
      {\hastype{\Gamma}{t}{A[\Delay(\mu\alpha.A)/\alpha]}}
      {\hastype{\Gamma}{\into\,t}{\mu\alpha.A}}
      \and
      \inferrule*
      {\hastype{\Gamma}{t}{\mu\alpha.A}}
      {\hastype{\Gamma}{\out\,t}{A[\Delay(\mu\alpha.A)/\alpha]}}
      \and
      \inferrule*
      {\hastype{\Gamma,\lock,x : \Delay A}{t}{A}}
      {\hastype{\Gamma}{\fix \; x.t}{\Box A}}
    \end{mathpar}
  \caption{Typing rules of Simply RaTT.}
\label{fig:typing}
\end{figure}

\section{Operational Semantics}
\label{sec:oper-semant}

Following the idea of \citet{krishnaswami13frp}, we devise an
operational semantics for Simply RaTT that is free of space leaks
\emph{by construction}. To this end, the operational semantics is
defined in terms of a machine that has access to a store consisting of
up to two separate heaps: A `now' heap $\eta_N$ from which we can
retrieve delayed computations, and a `later' heap $\eta_L$ where we
can store computations that should be performed in the next time
step. Once the machine advances to the next time step, it will delete
the `now' heap $\eta_N$ and the `later' heap $\eta_L$ will become the
new `now' heap. Thus the problem of proving the absence of space leaks
is reduced to the problem of soundness, i.e., that well-typed programs
never get stuck.

\subsection{Term Semantics}
\label{sec:term-semantics}

\begin{figure}
  % simply typed lambda calculus
  \begin{mathpar}
  \inferrule*
  {~}
  {\hevalSingle{v}{\sigma}{v}{\sigma}}
  \and
  \inferrule*
  {\hevalSingle {t} {\sigma} {u} {\sigma'}\\
    \hevalSingle {t'} {\sigma'} {u'} {\sigma''}}
  {\hevalSingle {\pair{t}{t'}} {\sigma} {\pair{u}{u'}} {\sigma''}}
  \and
  \inferrule*
  {\hevalSingle {t} {\sigma} {\pair{v_1}{v_2}} {\sigma'} \\ i \in \{1,2\}}
  {\hevalSingle {\pi_i(t)} {\sigma} {v_i} {\sigma'}}
  \and
  \inferrule*
  {\hevalSingle t {\sigma} v {\sigma'} \\ i \in \{1,2\}}
  {\hevalSingle {\interm_i(t)} {\sigma} {\interm_i(v)} {\sigma'}}
  \and
  \inferrule*
  {\hevalSingle {t} {\sigma} {\interm_i(u)} {\sigma'}\\
    \hevalSingle {t_i[v/x]} {\sigma'} {u_i} {\sigma''} \\ i \in \{1,2\}}
  {\hevalSingle {\caseterm{t}{x}{t_1}{x}{t_2}} {\sigma} {u_i} {\sigma''}}
  \and
  \inferrule*
  {\hevalSingle{t}{\sigma}{\lambda x.s}{\sigma'} \\
    \hevalSingle{t'}{\sigma'}{v}{\sigma''}\\
    \hevalSingle {s[v/x]}{\sigma''}{v'}{\sigma'''}}
  {\hevalSingle{t\,t'}{\sigma}{v'}{\sigma'''}}
  \and
  \inferrule*
  {\hevalSingle{t}{\sigma}{\ol m}{\sigma'} \\
    \hevalSingle{t'}{\sigma'}{\ol n}{\sigma''}}
  {\hevalSingle{t + t'}{\sigma}{\ol{m+n}}{\sigma''}}
  \and % reactive fragment
  \inferrule*
  {\sigma \neq \bot \\ l = \allocate{\sigma}}
  {\hevalSingle{\delay\,t}{\sigma}{l}{\sigma,l \mapsto t}}
  \and
  \inferrule*
  {\hevalSingle {t}{\lock\eta_N}{l}{\lock\eta_N'} \\
    \hevalSingle {\eta_N'(l)}{\lock\eta_N'\tick\eta_L} {v}{\sigma'}}
  {\hevalSingle {\adv\,t}{\lock\eta_N\tick\eta_L}{v}{\sigma'}}
  \and
  \inferrule*
  {\hevalSingle {t} {\bot} {v} {\bot} \\ \sigma \neq \bot}
  {\hevalSingle {\promote\,t} {\sigma} {v} {\sigma}}
  \and
  \inferrule*
  {\hevalSingle {t} {\lock\eta_N} {v} {\lock\eta'_N}}
  {\hevalSingle {\progress\,t} {\lock\eta_N\tick\eta_L} {v} {\lock\eta'_N\tick\eta_L}}
  \and
  \inferrule*
  {\hevalSingle{t}{\bot}{\rbox\,t'}{\bot} \\
    \hevalSingle{t'}{\sigma}{v}{\sigma'} \\ \sigma \neq \bot}
  {\hevalSingle{\unbox\,t}{\sigma}{v}{\sigma'}}
  \and %guarded fragment
  \inferrule*
  {\hevalSingle{t}{\sigma}{v}{\sigma'}}
  {\hevalSingle{\into\,t}{\sigma}{\into\,v}{\sigma'}}
  \and
  \inferrule*
  {\hevalSingle{t}{\sigma}{\into\,v}{\sigma'}}
  {\hevalSingle{\out\,t}{\sigma}{v}{\sigma'}}
  \and
  \inferrule*
  {\hevalSingle{t}{\bot}{\fix \; x.t'}{\bot} \\
    \hevalSingle{t'[l/x]}{\sigma,l \mapsto \unbox(\fix \;
      x.t')}{v}{\sigma'} \\ \sigma \neq \bot \\
    l =  \allocate{\sigma}}
  {\hevalSingle{\unbox\,t}{\sigma}{v}{\sigma'}}
\end{mathpar}
  \caption{Big-step operational semantics.}
  \label{fig:machine}
\end{figure}

\begin{figure}
  \begin{mathpar}
    \inferrule*
    {\hevalSingle{t}{\lock\eta\tick}{v :: l}{\lock\eta_N\tick\eta_L}}
    {\state{t}{\eta} \forwards{v} \state{\adv\,l}{\eta_L}}
    \and
    \inferrule*
    {\hevalSingle{t}{\lock\eta,\thel \mapsto v :: \thel \tick\thel
        \mapsto \unit}{v' :: l}{\lock\eta_N\tick\eta_L,\thel
        \mapsto \unit}}
    {\state{t}{\eta} \forward{v}{v'} \state{\adv\,l}{\eta_L}}
  \end{mathpar}
  \caption{Small-step operational semantics for stream unfolding and
    stream processing.}
  \label{fig:stream_machine}
\end{figure}

The operational semantics of terms is presented in
\autoref{fig:machine}. Given a term $t$ together with a store
$\sigma$, we write $\hevalSingle{t}{\sigma}{v}{\sigma'}$ to denote
that the machine evaluates $t$ in the context of $\sigma$ to a value
$v$ and produces an updated store $\sigma'$. Importantly, a store
$\sigma$ can take on three different forms: It may contain no heap,
written $\sigma = \bot$; it may consist of one heap $\eta_L$, written
$\sigma = \lock\eta_L$; or it may consist of two heaps $\eta_N$ and
$\eta_L$, written $\sigma = \lock\eta_N\tick\eta_L$. These different
forms of stores enforce effective restrictions on when the machine is
allowed to store or retrieve delayed computations. If $\sigma = \bot$,
then computations may neither be stored nor retrieved. If
$\sigma = \lock\eta_L$, then computations may be stored in $\eta_L$ to
be retrieved in the next time step. And if
$\sigma = \lock\eta_N\tick\eta_L$, computations may be stored in
$\eta_L$ as well as retrieved from $\eta_N$. Heaps themselves are
simply finite mappings from \emph{heap locations} to terms.

Given a store $\sigma$ that is not $\bot$, i.e., it is either of the
form $\lock\eta_L$ or $\lock\eta_N\tick\eta_L$, the machine can store
delayed computations on the `later' heap $\eta_L$. To this end, we use
the notation $\later{\sigma}$ to refer to $\eta_L$, and given
$l \nin \dom{\eta_L}$, we write $\sigma,l\mapsto t$ for the store
$\lock(\eta_L,l\mapsto t)$ or $\lock\eta_N\tick(\eta_L,l\mapsto t)$,
respectively. In turn, $\eta_L,l\mapsto t$ denotes the heap obtained
by extending $\eta_L$ with a new mapping $l\mapsto t$. To allocate a
fresh heap locations, we assume a function $\allocate{\cdot}$ that
takes a store $\sigma \neq \bot$ and returns a heap location $l$ such
that $l \nin \dom{\later{\sigma}}$. That is, given
$l = \allocate\sigma$, we can form the new store $\sigma,l\mapsto t$
without overwriting any mappings that are present in $\sigma$.

As the notation suggests, there is a close correspondence between the
shape of a context $\Gamma$ and the shape of a store $\sigma$. Terms
typable in an initial judgement (cf.\ \autoref{fig:judgement}) can be
executed safely with a store $\bot$ -- they need not store nor
retrieve delayed computations. Terms typable in a now judgement can be
executed safely with a store $\lock \eta_L$ or
$\lock \eta_N \tick \eta_L$ -- they may store delayed computations in
$\eta_L$, but need not retrieve delayed computations. And finally,
terms typable in a later judgement can be executed safely in a store of
the form $\lock\eta_N\tick\eta_L$ -- they may retrieve delayed
computations from $\eta_N$.

This intuition of the capabilities of the different stores can be
observed directly in the semantics for $\delay$ and $\adv$: For
$\delay\,t$ to evaluate, the machine expects a store that is not
$\bot$, i.e., a store $\lock \eta_L$ or $\lock \eta_N \tick
\eta_L$. Then the machine allocates a fresh heap location $l$ in the
heap $\eta_L$ and stores $t$ in it. This corresponds to the fact that
$\delay\,t$ can only be typed in a now judgement. Conversely,
$\adv\, t$ requires the store to be of the form
$\lock \eta_N \tick \eta_L$ so that $t$ can be evaluated safely with
the store $\lock \eta_N$ to a heap location $l$, which either already
existed in $\eta_N$ or was allocated when evaluating $t$. In either
case, the delayed computation stored at heap location $l$ is retrieved
and executed. The combinator $\progress$ with its typing rule similar
to that of $\adv$, also has similar operational behaviour in terms of
how it interacts with the store.

Fixed points are evaluated when a term $t$ that evaluates to a value
of the form $\fix\,x.t'$ is unboxed. For a general fixed point
combinator, we would expect that $\fix\,x.t'$ unfolds to
$t'[\fix\,x.t'/x]$. In our setting, the types dictate that
$\fix\,x.t'$ should rather unfold to
$t'[\delay(\unbox(\fix\,x.t'))/x]$, because $x$ has type $\Delay A$
and $\fix\,x.t'$ has type $\Box A$. This is close to the behaviour of
our machine (and would in fact be a safe alternative
definition). Instead, however, the machine anticipates that the term
allocates a mapping $l \mapsto \unbox(\fix\,x.t')$ on the store and
evaluates to that heap location $l$. Therefore, the machine evaluates
the fixed point by allocating a mapping $l \mapsto \unbox(\fix\,x.t')$
on the store right away and evaluating $t'[l/x]$ subsequently.

\subsection{Stream Semantics}
\label{sec:stream-semantics}

The careful distinction between a `now' heap $\eta_N$ and a `later'
heap $\eta_L$ is crucial in order to avoid \emph{implicit} space
leaks. After the machine has evaluated a term $t$ to a value $v$ and
produced a store of the form $\lock \eta_N \tick \eta_L$, we can
safely garbage collect the entire heap $\eta_N$ and compute the next
step with the store $\lock\eta_L\tick$. For example, if the original
term $t$ was of type $\Str\Nat$, then its value $v$ will be of the
form $\ol n :: l$, where $n$ is the head of the stream and $l$ is a
heap location that points to the delayed computation that computes the
tail of the steam. The tail of the stream can then be safely computed
by evaluating $\adv\, l$ with the store $\lock \eta_L\tick$, i.e.
with the entire `now' heap $\eta_N$ garbage collected.

This idea of computing streams is made formal in the definition of the
small-step operational semantics $\forwards{\cdot}$ for streams given
in the left half of \autoref{fig:stream_machine}. It starts by
evaluating the term to a value of the form $v :: l$, which
additionally produces the store $\lock\eta_N\tick\eta_L$. Then the
computation can be continued by evaluating $\adv\, l$ in the garbage
collected store $\lock\eta_L\tick$, which in turn produces a value
$v' :: l'$ and a store $\lock\eta'_N\tick\eta'_L$ -- and so on.

Given a closed term $t$ of type $\Box \Str A$, we compute the elements
$v_1,v_2,v_3, \dots$ of the stream defined by $t$ as follows:
\[
  \state{\unbox\,t}{\emptyset} \forwards{v_1} \state{t_1}{\eta_1}
  \forwards{v_2} \state{t_2}{\eta_2}  \forwards{v_3} \dots
\]
where we start with the empty heap $\emptyset$. Each state of the
computation $\state{t_i}{\eta_i}$ consists of a term $t_i$ and its
`now' heap $\eta_i$.

For example, consider the stream
$\hastype{}{\Varid{nats}}{\Box(\Str\Nat)}$ defined in
\autoref{sec:stable-types}. To understand how this stream is executed,
it is helpful to see how the definition of $\Varid{nats}$ desugars to
our core calculus. Namely, $\Varid{nats}$ is defined as the term
$\Varid{from} \promoteapp \ol 0$ where
\[
  \Varid{from} = \fix f.\lambda n. n :: \delay\, (\adv\, f\,
  (\progress\,(n+\ol 1)))
\]
Here we also unfold the definition of $\progressapp$. The first three
steps of executing the $\Varid{nats}$ stream look as follows:
\[
  \begin{array}{rcl}
  \state{\unbox\,\Varid{nats}}{\emptyset}
  &\forwards{\ol 0}& \state{\adv\,l'_1}{l_1 \mapsto
    \unbox\,\Varid{from},\,l'_1\mapsto \adv\,l_1\,(\progress\, \ol 0 +
                     \ol 1)}\\
  &\forwards{\ol 1}& \state{\adv\,l'_2}{l_2 \mapsto
    \unbox\,\Varid{from},\,l'_2\mapsto \adv\,l_2\, (\progress\, \ol 1 + \ol 1)}\\
  &\forwards{\ol 2}& \state{\adv\,l'_3}{l_3 \mapsto
    \unbox\,\Varid{from},\,l'_3\mapsto \adv\,l_3\,(\progress\, \ol 2 + \ol 1)}\\
  &\vdots&
  \end{array}
\]
As expected, the computation produces the consecutive natural
numbers. In each step of the computation, the location $l_i$ stores
the fixed point $\Varid{from}$ that underlies $\Varid{nats}$, whereas
$l'_i$ stores the computation that calls that fixed point with the
current state of the computation, namely the number $\ol i$.

Our main result is that execution of programs by the machine in
\autoref{fig:machine} and \autoref{fig:stream_machine} is
safe. For the stream semantics, this means that we can compute the
stream defined by a term $t$ of type $\Box(\Str A)$ by successive
unfolding ad infinitum as follows:
\[
  \state{\unbox\,t}{\emptyset} \forwards{v_1} \state{t_1}{\eta_1}
  \forwards{v_2} \state{t_2}{\eta_2}  \forwards{v_3} \dots
\]
This intuition is expressed more formally in the following theorem:
\begin{theorem}[productivity]
  \label{thr:productivity}
  Let $A$ be a value type, i.e., a type constructed from
  $\Unit,\Nat, +,\times$ only, and $\hastype{}{t}{\Box(\Str
    A)}$. Given any $n \in \nats$, there is a reduction sequence
  \[
    \state{\unbox\,t}{\emptyset} \forwards{v_1} \state{t_1}{\eta_1}
    \forwards{v_2} \quad \dots \quad  \forwards{v_n}
    \state{t_n}{\eta_n} \quad \text{such that $\hastype{}{v_i}{A}$ for all $1\le i \le n$.}
  \]
\end{theorem}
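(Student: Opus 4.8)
The plan is to obtain productivity as a corollary of the fundamental property of the step-indexed Kripke logical relation that underlies the main soundness result \autoref{thr:lrl}. From that result I extract three ingredients: a value interpretation $\vinterp{A}{k}{}$ of each type as a set of values indexed by a step index $k$ (and a world, i.e.\ a pair of heap typings separated by $\tick$); the fact that a closed, well-typed term lies in the relation at every step index; and the garbage-collection content, namely that once a stream value has been produced the now-heap may be discarded and the later-heap promoted without breaking the relation, at the cost of decrementing the step index by one. Since $n$ is fixed and the fundamental lemma holds at all indices, I instantiate it at index $n$ (any index $\ge n$ works), which supplies enough ``fuel'' to run the machine $n$ times.

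\textbf{Iteration.} The heart of the argument is a configuration invariant preserved by each application of the stream rule of \autoref{fig:stream_machine}. Concretely, I maintain that at step $i$ the state $\state{s_i}{\eta_i}$ — with $s_0 = \unbox\,t$, $\eta_0 = \emptyset$, and $s_i = \adv\,l_i$ for $i \ge 1$ — lies in the computation interpretation of $\Str{A}$ at index $n-i$ in the world whose now-component types $\eta_i$ and whose later-component is empty. The base case follows by unfolding the interpretation of $\Box$ applied to the closed term $t \colon \Box(\Str{A})$. For the inductive step I unfold the stream isomorphism $\Str{A} \cong A \times \Delay\Str{A}$: membership of $s_i$ in the computation relation guarantees that $\hevalSingle{s_i}{\lock\eta_i\tick}{v_{i+1} :: l_{i+1}}{\lock\eta_N\tick\eta_L}$ terminates, with $v_{i+1} \in \vinterp{A}{n-i}{}$ and with $l_{i+1}$ pointing, in $\eta_L$, to a delayed computation lying in the interpretation of $\Delay\Str{A}$. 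The garbage-collection part of \autoref{thr:lrl} is exactly what lets me discard $\eta_N$, rename $\eta_L$ to the new now-heap $\eta_{i+1}$, and conclude that $\state{\adv\,l_{i+1}}{\eta_{i+1}}$ again satisfies the invariant, now at index $n-i-1$. Iterating $n$ times produces $v_1,\dots,v_n$ with $v_i \in \vinterp{A}{n-i+1}{}$, and each of these indices is $\ge 1$.

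\textbf{From semantic to syntactic typing.} It remains to turn $v_i \in \vinterp{A}{k}{}$, for $k \ge 1$, into the syntactic judgement $\hastype{}{v_i}{A}$ required by the statement. This is where the hypothesis that $A$ is a value type is essential: I prove, by induction on the grammar $\Unit \mid \Nat \mid A\times B \mid A+B$, that any $v \in \vinterp{A}{k}{}$ is a closed value of type $A$ in the empty context. The cases are routine — $\unit$, $\bar m$, $\pair{v_1}{v_2}$, $\interm_i\,v'$ — and crucially there are no cases for $\Delay$, $\Box$, $\to$ or $\mu$, whose semantic values may be heap locations, closures or boxed terms that are not typeable in the empty context with an empty store. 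This is precisely why the restriction to value types appears in the statement.

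\textbf{Main obstacle.} Granting \autoref{thr:lrl}, the only genuinely delicate point is the bookkeeping in the inductive step: the stream rule turns the later-heap $\eta_L$ into the next now-heap while deleting $\eta_N$, and I must check that this matches the world-extension and heap-promotion structure baked into the logical relation, so that the interpretation of $\Delay\Str{A}$ at a location really does hand back a stream computation at the decremented index in the promoted world. The substantive work — defining the relation so that $\fix$, $\delay/\adv$, $\progress/\promote$ and $\unbox$ all validate it, and that discarding the now-heap is sound — is the content of \autoref{thr:lrl} itself and is assumed here; productivity is then a comparatively mechanical unrolling of that relation along the stream semantics.
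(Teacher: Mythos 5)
Your proposal is correct and takes essentially the same route as the paper's proof: your configuration invariant is precisely the paper's family $S_k(A) = \setcom{\state{t}{\eta}}{t \in \einterp{\Str A}{\lock\eta\tick}{\set{\emptyset}^k}}$, established at every index from \autoref{thr:lrl} and consumed one step at a time, and your semantic-to-syntactic step for value types is exactly \autoref{lem:vinterp_value}. The bookkeeping you flag as the main obstacle (that the post-evaluation store has shape $\lock\eta_N\tick\eta_L$ so the later heap can be promoted to the new now heap) is discharged in the paper by \autoref{lem:machine_monotone} together with the garbage collection built into the definition of $\vinterp{\Delay A}{\sigma}{H;\ol{H}}$.
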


\subsection{Stream Transducer Semantics}
\label{sec:stre-transd-semant}

More importantly, our language also facilitates stream processing,
that is executing programs of type $\Box (\Str A \to \Str B)$. The
small-step operational semantics $\forward{\cdot}{\cdot}$ for
executing such programs is given on the right half of
\autoref{fig:stream_machine}. So far the store has only been used by
the term semantics to store delayed computations. In addition to that
purpose, the stream transducer semantics uses the store to transfer
the data received from the input steam to the stream transducer. To
this end, we assume an arbitrary but fixed heap location $\thel$,
which the machine uses to successively insert the input stream of type
$\Str A$ as it becomes available. Note that the stream transducer
semantics reserves the heap location $\thel$ in new `later' heap by
storing $\unit$ in it. That means, $\thel$ cannot be allocated by the
machine and is available later when the input becomes available and
needs to be stored in $\thel$.

Given a closed term $t$ of type $\Box (\Str A \to \Str B)$, we can
execute it as follows:
\[
  \state{\unbox\,t\,(\adv\, \thel)}{\emptyset} \forward{v_1}{v'_1}
  \state{t_1}{\eta_1} \forward{v_2}{v'_2} \state{t_2}{\eta_2}
  \forward{v_3}{v'_3} \dots
\]
The machine starts with an empty heap $\emptyset$. In each step
$\state{t_i}{\eta_i} \forward{v_{i+1}}{v'_{i+1}}
\state{t_{i+1}}{\eta_{i+1}}$, the machine starts in a state
$\state{t_i}{\eta_i}$ consisting of a term $t_i$ and heap
$\eta_i$. Then it reads an input $v_{i+1}$ and subsequently produces
the output $v'_{i+1}$ and the next state
$\state{t_{i+1}}{\eta_{i+1}}$.

Let's consider a simple stream transducer to illustrate the workings
of the semantics. The stream transducer $\Varid{sum}$ takes a stream
of numbers and computes at each point in time the sum of all previous
numbers from the input stream. To this end $\Varid{sum}$ uses the
auxiliary function $\Varid{sum'}$ that takes as additional argument
the accumulator of type $\Nat$.
\genSum

To appreciate the workings of the stream transducer semantics, we
desugar the definition of $\Varid{sum'}$ in the surface syntax to our
core calculus. In addition, we also unfold the definition of
$\delayapp$:
\begin{align*}
  \Varid{sum'} &= \fix f. \lambda acc. \lambda s. (acc + \head\, s) ::
  \delay\, (\adv\, (f \progressapp
  (acc + \head\, s))\, (\adv\,(\tail\, s)))
\end{align*}
Let's look at the first three steps of executing the $\Varid{sum}$
stream transducer. To this end, we feed the computation $2$, $11$, and
$5$ as input:
\[
  \begin{array}{rcl}
  &&\state{\unbox\,\Varid{sum}}{\emptyset}\\
  &\forward{\ol 2}{\ol 2}&
  \state{\adv\,l'_1}{l_1 \mapsto \unbox\,\Varid{sum'},\,l'_1\mapsto
    \adv\,(l_1\progressapp (\ol 0 + \head\, (\ol 2 :: \thel)))\,(\adv\, (\tail\,
    (\ol 2 :: \thel)))}\\
  &\forward{\ol{11}}{\ol{13}}&
  \state{\adv\,l'_2}{l_2 \mapsto \unbox\,\Varid{sum'},\,l'_2\mapsto
    \adv\,(l_2\progressapp (\ol 2 + \head\, (\ol{11} :: \thel)))\,(\adv\, (\tail\,
    (\ol{11} :: \thel)))}\\
  &\forward{\ol 5}{\ol{18}}&
  \state{\adv\,l'_3}{l_3 \mapsto \unbox\,\Varid{sum'},\,l'_3\mapsto
    \adv\,(l_3\progressapp (\ol{13} + \head\, (\ol 5 :: \thel)))\,(\adv\, (\tail\,
                           (\ol 5 :: \thel)))}\\
    &\vdots&
  \end{array}
\]
As expected, we receive $2$, $13$ ($= 2 + 11$), and $18$
($= 2 + 11 + 5$) as result. Moreover, in each step of the computation
the location $l_i$ stores the fixed point $\Varid{sum'}$ that underlies
the definition of $\Varid{sum}$, whereas $l'_i$ stores the computation
that calls that fixed point with the new accumulator value ($0 + 2$,
$2 + 11$, and $13 + 5$, respectively) and the tail of the input
stream.

Corresponding to the productivity property from the previous section,
we prove the following causality property that states that the stream
transducer semantics never gets stuck. To characterise the causality
property, the theorem constructs a family of sets $T_k(A,B)$ which
consists of states $\state{t}{\eta}$ on which the stream transducer
machine can run for $k$ more time steps.
\begin{theorem}[causality]
  \label{thr:causality}
  Given any value types $A$ and $B$, there is a family of sets
  $T_k(A,B)$ such that the following holds for all $k \in \nats$:
  \begin{enumerate}[(i)] \item If
    $\hastype{}{t}{\Box(\Str A \to \Str B)}$ then
    $\state{\unbox\, t\,(\adv\,\thel)}{\emptyset} \in T_k(A,B)$.
  \item If $\pair{t}{\eta} \in T_{k+1}(A,B)$ and $\hastype{}{v}{A}$
    then there are $t', \eta'$, and $\hastype{}{v'}{B}$ such that
    \[
    \state{t}{\eta} \forward{v}{v'} \state{t'}{\eta'} \text{ and }
    \state{t'}{\eta'} \in T_k(A,B).
    \]
  \end{enumerate}
\end{theorem}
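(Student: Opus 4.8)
The plan is to \emph{construct} the family $T_k(A,B)$ as a step-indexed Kripke logical relation and then read off both clauses from its fundamental property. This reuses exactly the value and term interpretations $\mathcal V\llbracket\cdot\rrbracket$ and $\mathcal T\llbracket\cdot\rrbracket$ that underlie the garbage-collection result \autoref{thr:lrl}: these are indexed by \emph{worlds} carrying a step budget together with a semantic typing of the now- and later-heaps, and they are monotone under world extension. Since $A$ and $B$ are value types, $\mathcal V\llbracket A\rrbracket^w$ is world- and index-independent and coincides with $\{v : \hastype{}{v}{A}\}$ (similarly for $B$); this is what lets clause (ii) report $\hastype{}{v'}{B}$ directly. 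Unfolding $\Str B \cong B\times\Delay\Str B$, a value lies in $\mathcal V\llbracket\Str B\rrbracket^w$ at index $k+1$ iff it has the form $v'::l$ with $v'\in\mathcal V\llbracket B\rrbracket$ and $l$ a later-heap location whose stored computation is in $\mathcal T\llbracket\Str B\rrbracket$ at the decremented world of index $k$. I would then define $\state{t}{\eta}\in T_k(A,B)$ to mean: for every $v$ with $\hastype{}{v}{A}$, running $t$ in the store $\lock(\eta,\thel\mapsto v::\thel)\tick(\thel\mapsto\unit)$ is in $\mathcal T\llbracket\Str B\rrbracket$ at a world of index $k$ whose now-heap is described by $\eta$ and whose constraint on $\thel$ is ``holds some $v::\thel$ with $v\in\mathcal V\llbracket A\rrbracket$''. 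The self-reference $v::\thel$ encodes that after consuming the current input the machine again expects an arbitrary input at $\thel$.

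For clause (i), apply \autoref{thr:lrl} to the closed, well-typed term $t:\Box(\Str A\to\Str B)$: it lies in $\mathcal V\llbracket\Box(\Str A\to\Str B)\rrbracket^w$ for every world $w$. Unboxing gives a value in $\mathcal V\llbracket\Str A\to\Str B\rrbracket$ at the now-world, and the reservation of $\thel$ guarantees that $\adv\,\thel$ is in $\mathcal T\llbracket\Str A\rrbracket$ there, so the function clause of the interpretation yields $\unbox\,t\,(\adv\,\thel)\in\mathcal T\llbracket\Str B\rrbracket$. As this holds at every index $k$ starting from the empty now-heap, $\state{\unbox\,t\,(\adv\,\thel)}{\emptyset}\in T_k(A,B)$ for all $k$.

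For clause (ii), take $\state{t}{\eta}\in T_{k+1}(A,B)$ and $\hastype{}{v}{A}$. By the definition of $T_{k+1}$ through $\mathcal T\llbracket\Str B\rrbracket$ and the unfolding of $\Str B$, evaluation of $t$ in $\lock(\eta,\thel\mapsto v::\thel)\tick(\thel\mapsto\unit)$ terminates with a value $v'::l$ and a store $\lock\eta_N\tick\eta_L$ in which $v'\in\mathcal V\llbracket B\rrbracket$ (hence $\hastype{}{v'}{B}$) and $l$ satisfies the $\Delay\Str B$ clause at index $k$. This is precisely the premise of the small-step rule of \autoref{fig:stream_machine}, so $\state{t}{\eta}\forward{v}{v'}\state{\adv\,l}{\eta_L}$. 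That $l$ satisfies the $\Delay$ clause at index $k$ says exactly that $\adv\,l$, run against $\eta_L$ with $\thel$ holding the next input, is in $\mathcal T\llbracket\Str B\rrbracket$ at index $k$, i.e. $\state{\adv\,l}{\eta_L}\in T_k(A,B)$.

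The main obstacle is showing that discarding the now-heap $\eta_N$ does not break membership in $T_k$, which is the real content of the step. The world for the later-heap $\eta_L$ must not depend on now-heap locations, so that the garbage-collected state still satisfies its world; this ``no later-to-now back-references'' invariant is exactly what the Fitch-style typing enforces (a later-judgement may reach the now-heap only transiently through $\adv$, never storing a now-location for future use) and it is the crux already underlying \autoref{thr:lrl}. The second delicate point is the self-referential specification of $\thel$: because $\thel$ is reserved and overwritten with each successive input, its semantic content cannot be a fixed stream but must be quantified as ``any $v::\thel$ with $v\in\mathcal V\llbracket A\rrbracket$'', and one must check this specification survives the transition in which $\eta_N$ is dropped and $\eta_L$ becomes the new now-heap. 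Getting these two world conditions to compose across the garbage-collection step is where essentially all the work lies.
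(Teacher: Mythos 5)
Your high-level plan coincides with the paper's: define $T_k(A,B)$ through the term relation of the logical relation, derive clause (i) from \autoref{thr:lrl}, and obtain clause (ii) by unfolding $\Str B$ and using the fact that for value types the semantic values are exactly the closed syntactic values (\autoref{lem:vinterp_value}). The gap is in the one place you yourself flag as ``where essentially all the work lies'': your definition of $T_k$ does not survive that difficulty, and as stated it breaks.

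You evaluate $t$ in the store $\lock(\eta,\thel\mapsto v::\thel)\tick(\thel\mapsto\unit)$, i.e., with $\unit$ reserved at $\thel$ in the later heap exactly as in the machine rule, and you delegate the description of future inputs to a world constraint on $\thel$. But in the logical relation of \autoref{fig:logical_relation} the future-heap sets are only ever \emph{adjoined} after garbage collection: $\vinterp{\Delay A}{\sigma}{H;\ol{H}}$ requires $\sigma(l)\in\einterp{A}{\trim{\sigma}\tick\eta}{\ol{H}}$ for all $\eta\in H$, where $\trim{\sigma}$ keeps the old later heap verbatim as the new now-heap; nothing ever overwrites $\thel$. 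Consequently clause (i) already fails for your $T_k$: showing that $\unbox\,t\,(\adv\,\thel)$ inhabits the term relation forces $\thel\in\vinterp{\Delay\Str A}{\sigma}{H;\ol{H}}$, which requires the later-heap entry $\sigma(\thel)=\unit$ to inhabit $\einterp{\Str A}{\cdot}{\cdot}$ --- and $\unit$ is not semantically a stream. (If instead you intended to change the $\Delay$-clause so that future heaps overwrite $\thel$, you would have to re-prove \autoref{thr:lrl} for the modified relation.) The paper resolves this by putting the \emph{anticipated next input} into the later heap in the definition of $T_k(A,B)$, i.e., $\thel\mapsto w::\thel$ quantified over all $w\in\vinterp{A}{\bot}{()}$, with $H(A)$ consisting of heaps of the same shape; then the self-referential input stream $v::\thel$ is semantically well-formed at every step (established by an induction on $k$ that your clause (i) omits). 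The price is a mismatch with the machine rule, whose premise has $\thel\mapsto\unit$; the bridge is \autoref{lem:dontTouchNow} --- evaluation never reads the later heap, a fact that itself requires the assumption that $\allocate{\cdot}$ depends only on the domain of the later heap --- combined with determinism (\autoref{prop:determ}), so that runs in stores differing only at $\thel$'s later-heap entry yield the same value and the same store up to that entry. This pair of lemmas is the essential extra ingredient beyond \autoref{thr:lrl}, and it is absent from your proposal; without it you can neither discharge the machine transition in clause (ii) nor make the specification of $\thel$ compose across the garbage-collection step.
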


% \begin{theorem}[causality]
%   \label{thr:causality}
%   Let $A$ and $B$ be value types, and
%   $\hastype{}{t}{\Box(\Str A \to \Str B)}$. Given any $n \in \nats$
%   and values $v_1,\dots,v_n$ such that $\hastype{}{v_i}{A}$ for all
%   $1 \le i \le n$, we find a reduction sequence
%   \[
%     \state{\unbox\,t}{\emptyset} \forward{v_1}{v'_1}
%     \state{t_1}{\eta_1} \forward{v_2}{v'_2} \quad \dots\quad
%     \forward{v_n}{v'_n} \state{t_n}{\eta_n}
%   \]
%   such that $\hastype{}{v'_i}{B}$ for all $1 \le i \le n$.
% \end{theorem}

That is, any term $t$ of type $\Box(\Str A\to \Str B)$ defines a
causal stream function which is effectively computed by the machine:
\[
  \state{\unbox\,t}{\emptyset} \forward{v_1}{v'_1} \state{t_1}{\eta_1}
  \forward{v_2}{v'_2} \state{t_2}{\eta_2}  \forward{v_3}{v'_3} \dots
\]

Note that the stream transducer semantics also extends to stream
transducers with multiple streams as inputs. This can be achieved by a
combinator $\Varid{split}$ of type
$\Box (\Str{A \times B} \to \Str A \times \Str B)$. Similarly, the
semantics also extends to transducers that take an event as input by
virtue of a combinator $\Varid{first}$ of type
$\Box (\Str{1 + A} \to \Ev A)$. Conversely, transducers producing
events instead of streams can be executed using a combinator of type
$\Box (\Ev A \to \Str{1 + A})$.

We give the proof of \autoref{thr:productivity} and
\autoref{thr:causality} in \autoref{sec:metatheory}. Both results
follow from a more general result for the machine, which is formulated
using a Kripke logical relation.

\subsection{Counterexamples}
\label{sec:counterexamples}

To conclude this section we review some programs that are rejected by
our type system and illustrate their operational behaviour.

\paragraph{Unbox under delay.}
Recall the alternative definition of the stream of consecutive natural
numbers $\Varid{leakyNats}$ that uses the $\Varid{map}$ combinator. First
consider the definition of $\Varid{leakyNats}$ in our core calculus:
\[
  \Varid{leakyNats} = \fix\,s. \ol 0 :: \delay\,(\unbox\,(map\,(\rbox(\lambda
  x. x + \ol 1)))) \delayapp s
\]

Let's contrast the execution of $\Varid{nats}$ that we have seen in
\autoref{sec:stream-semantics} with the execution of $\Varid{leakyNats}$:
\[
      \arraycolsep=1.4pt
  \begin{array}{rc@{\quad}l}
  &&\state{\unbox\,\Varid{leakyNats}}{\emptyset}\\
    &\forwards{\ol 0}& \state{\adv\,l'_1}{
                       \begin{array}[c]{ll}
                       l_1 \mapsto
    \unbox\,\Varid{leakyNats},&l'_1\mapsto \unbox\,\Varid{map}\, (\rbox\,\lambda
  x. x + \ol 1)\, (\adv\, l_1)
                       \end{array}
}\\[1.5em]
    &\forwards{\ol 1}& \state{\adv\,l^{3}_2}{
                       \begin{array}[c]{ll}
                         l^0_2 \mapsto \unbox\,\Varid{leakyNats}, &l^1_2
                         \mapsto \unbox\,\Varid{map}\, (\rbox\,\lambda
                         x. x + \ol 1)\, (\adv\, l^0_2),\\ l^{2}_2
                         \mapsto \unbox\,\Varid{step},&l^{3}_2 \mapsto
                         \adv\, l^{2}_2\,(\adv\,(\tail\, (\ol 0 :: l^1_2)))
                       \end{array}}\\[2em]
    &\forwards{\ol 2}& \state{\adv\,l^{5}_3}{
                       \begin{array}[c]{ll}
                         l^0_3 \mapsto \unbox\,\Varid{leakyNats},&l^1_3
                         \mapsto \unbox\,\Varid{map}\, (\rbox\,\lambda
                         x. x + \ol 1)\, (\adv\, l^0_3),\\ l^{2}_3
                         \mapsto \unbox\,\Varid{step},&l^{3}_3 \mapsto
                         \adv\, l^{2}_3\,(\adv\,(\tail\, (\ol 0 ::
                         l^1_3)))\\
                         l^{4}_3
                         \mapsto \unbox\,\Varid{step},&l^{5}_3 \mapsto
                         \adv\, l^{4}_3\,(\adv\,(\tail\, (\ol 1 :: l^3_3)))
                       \end{array}}\\
  &\vdots&
  \end{array}
\]
where
$\Varid{step} = \fix\, f. \lambda\, s. \unbox\, (\rbox\, \lambda\,
n. n + \ol 1)\, (\head\, s) :: (f \delayapp \tail\, s)$.

%In each computation step the number of allocated heap locations grows
%by two. This may first seem like a space leak. However, space leaks
%are ruled out by the operational semantics -- after each step the
%`now' heap is deleted. Instead, the above behaviour is in fact a
%manifestation of a time leak: In each step of the computation, the
%machine stores the same delayed computations that were on the heap in
%the previous step plus two additional delayed computations.

While our type system rejects the term $\Varid{leakyNats}$, a
corresponding term is typable in \citeauthor{krishnaswami13frp}'s
calculus~\citep{krishnaswami13frp} and manifests the same memory
allocation behaviour as $\Varid{leakyNats}$ in our machine.

\paragraph{Lambda abstraction under delay.}

Recall the definition of the stream $\Varid{leaky}$ from
\autoref{sec:function-types}. It introduces a lambda abstraction in a
later judgement and is therefore rejected by our type system.
\[
  \begin{array}{rcl}
  &&\state{\unbox\,\Varid{leaky}}{\emptyset}\\[.5em]
    &\forwards{\sym{true}}& \state{\adv\,l'_1}{
                            \begin{aligned}[c]
                              &l_1 \mapsto
                              \unbox\,\Varid{leaky'},\\
                              &l'_1\mapsto \adv\, (\mathbf{if}\,
                              (\lambda\, x.\sym{true})\,
                              \unit\,\mathbf{then}\,
                              l_1\,\mathbf{else}\, l_1)\,
                              (\lambda\,x. \head\, (\adv\, l_1\,
                              \lambda\,y. \sym{true}))
                            \end{aligned}
} \\[1.5em]
    &\forwards{\sym{true}}& \state{\adv\,l'_2}{
                            \begin{aligned}[c]
                              &l_2 \mapsto
                              \unbox\,\Varid{leaky'},\\
                              &l'_2\mapsto \adv\, (\mathbf{if}\,
                              (\lambda\, x.\head\,(\adv\, l_1\,\lambda\,y.\sym{true}))\,\unit\,\mathbf{then}\,
                              l_2\,\mathbf{else}\, l_2)\\
                              &\phantom{l'_2\mapsto \adv\, }(\lambda\,x. \head\, (\adv\, l_2\,
                              \lambda\,y. \sym{true}))
                            \end{aligned}
                                } \\
    &\centernot{\forwards{\cdot}}&
  \end{array}
\]
Note that the term
$(\lambda\, x.\head\,(\adv\, l_1\,\lambda\,y.\sym{true}))$ from the
heap after the first step is a value and thus appears unevaluated also
in the heap after the second step. However, this term contains a
reference to the heap location $l_1$, which has been garbage collected
completing the second step. The machine thus ends up in a stuck state
when it tries to dereference the garbage collected heap location $l_1$
during the third step.

\section{Generic FRP Library}
\label{sec:stream_library}
This section gives a number of higher-order FRP combinators in Simply
RaTT, reminiscent of those found in libraries such as Yampa
\cite{nilsson2002}.  These can be used for programming with streams
and events.

Perhaps the simplest example of a stream function is the constant
stream over some element.  Since we need to output this element in
each time step in the future, we require it to come from a stable
type. Thus the argument is of type $\Box\, A$.
\genConstGeneral
We can now recreate the $\Varid{zeros}$ stream presented above as:
\genConstZeros
Another simple way to generate a stream is to iterate a function
$f : A \to A$ over some initial input, such that the output stream
will be $(a,f a,f (f a),\ldots)$.  Since we will keep using the
function at every time step, it needs to be stable, i.e.,
$f: \Box(A \to A)$. Moreover, since we will keep a state with the
current value of type $A$, that type $A$ must be stable as well. We
adopt a syntax like the one used in Haskell for type classes, to
denote the additional requirement that a type is stable:
\genIter
With this, we can define the stream of natural numbers:
\genIterNats
We may also define a more general $\Varid{iter}$ where $A$ need
not be stable:
\genIterGeneral

Given some stream, a standard operation in an FRP setting is to
\emph{filter} it according to some predicate.  This behaviour is easy
to implement in Simply RaTT, but because productivity forces us to
output a value at each time step, if we take as input $\Str{A}$, we
will need to output $\Str{\Maybe{A}}$, where $\Maybe{A}$ is a
shorthand for $1 + A$. Accordingly, we use the notation $\nothing$ and
$\just\,t$ to denote $\interm_1\, \unit$ and $\interm_2\,t$,
respectively.
%
% \genFilter
% %
% Alternatively, we can define $\Varid{filter}$ using $\Varid{map}$ as
%
\genFilterMap
To go from $\Str{\Maybe{A}}$ and back to $\Str{A}$, we can use the
$\Varid{fromMaybe}$ function that replaces each missing value with a
default value:
\genFromMaybe

%Another simple FRP function is alternating between two streams:
%  \genAlternate
%Note that we simply throw away the head of the stream we do not output.
%We \emph{could} write a version of $\Varid{alternate}$ that keeps the head of the second stream, and outputs that in the next state.
%This would require an arbitrary size state, for instance a linked list, and would clearly accumulate arbitrary memory.
%This version \emph{is} typable in Simply RaTT, but the state becomes explicit, thus revealing the space leak.
%Assuming a list type, such a function would have the type:
%\[A \; \stable \Rightarrow \Box(\sym{List}(A) \to \Str{A} \to \Str{A} \to \Str{A})\]

Given two streams, we can construct the product stream by simply ``zipping'' the two streams together.
It is often easier to construct the more general version where a function is applied to each pair of inputs
  \genZipWith
The regular zip function is then defined as
  \genZip

Many applications require the ability to dynamically change the dataflow graph, e.g., when opening and closing 
windows in a GUI. Such behaviour can be implemented using \emph{switches}, such as the following, which
%
%Dynamic beha
%
%The functions described above show how to construct various kinds of streams.
%To have a truly reactive language, we also need to be able to implement dynamic behavior of the streams we construct.
%As an example of dynamic behavior, we define a stream that \emph{switches} its behavior based on an event.
given an initial stream and a stream event, outputs a stream following the initial stream until it receives a new one on
its second argument
%
%behaves like the initial stream until the stream event arrives and continues as that
  \genSwitch
% With this, we could begin to formulate a reactive user interface that can reacts to, for instance, keyboard inputs via events.

As we have described above, we may define streams that require a state,
but the state must be defined explicitly.  An example is the
\emph{scan} function that given a binary operator and an initial
state, will output the stream of successive application of the binary
operator on the input stream
\genScan
We can now redefine the $\Varid{sum}$ function  from
\autoref{sec:stre-transd-semant} as follows:
\genSumScan

In general, we can encode any computable stream in our language by
virtue of the following \emph{unfolding} combinator:
\genUnfold

To further showcase programming with events, we define the function
$\Varid{await}$, which listens for two events and produces a pair
event that triggers after both events have arrived. As with
$\Varid{scan}$, we need a state to keep the value of the first arriving event
while waiting for the second one. This behaviour is implemented by two helper functions,
which differ only in which element of the pair is given, and we only show one:
\genAwaitA
We can now define $\Varid{await}$ as
\genAwait
The requirement that $A$ and $B$ be stable is crucial since we need
to keep the first arriving event until the second occurs. 

A second example using events is the \emph{accumulator} combinator.
Given a value and a stream of events carrying functions, every time an
event is received, the function is applied to the value and output
as an event:
\genAccum
The $\Varid{accum}$ function uses the helper function below that
takes a single event carrying a function and produces an event that
applies the function to a given value:
\genEventApp

\section{Simulating Lustre}
\label{sec:lustre}

Lustre is a synchronous dataflow language. Programs describe dataflow graphs and evaluation
proceeds in steps, reading input signals and producing output signals. Each signal is associated
with a clock, which is always a sub-clock of the global clock, and as such can be described as
a sequence of Booleans describing when the clock ticks. A pair of a
clock and a signal that produces an output
whenever the clock ticks is called a \emph{flow}.
%
%The basic building blocks of Lustre are \emph{clocks} and \emph{flows}.
%A clock is a stream of Boolean values used to determine ``when'' something occurs.
%A flow is then a pair consisting of a possible infinite sequence of values and a clock that describes when to produce values.
%A programmer then construct \emph{nodes} based on manipulating flows.

In Simply RaTT clocks can be encoded as Boolean streams and flows as streams of maybe values
\[
  \sym{Clock} = \Str{\sym{Bool}}\hspace{2cm}
  \sym{Flow}(A) = \Str{\Maybe{A}}
\]
With these encodings, we now show how to encode some of the basic constructions of Lustre. 

The clock associated to a flow ticks whenever the stream produces a value in $A$. For example,
the 
\emph{basic clock} of the system is the fastest possible clock 
and the clock $\Varid{never}$ is the clock that never ticks. These can be defined as
%of a program and defines a single
%``cycle'' of the program. We can also define the ``slowest'' clock
%$\Varid{never}$ as the clock that never ticks.
\begin{center}
  \vspace{-1em}
  \begin{minipage}[t]{0.45\linewidth}
    \lustreBasicClock
  \end{minipage}
  \begin{minipage}[t]{0.45\linewidth}
    \lustreNever
  \end{minipage}
\end{center}
In Simply RaTT, a cycle of the program corresponds to a single stream
(transducer) unfolding.

Given a clock, we can slow it down to tick only at certain intervals.
We define here a function that given a clock, slows it down to only tick every $n$th tick.
Since we need to carry a state (how often to tick and what step we are at) we define first a helper function:
  \lustreEveryNthAux
We can now define the actual function by giving the helper function an initial state:
  \lustreEveryNth

Given a flow and a clock, we can restrict the flow to that clock.
If the clock is faster than the ``internal'' clock of the flow, 
%meaning how often the stream produces a $\just$ rather than a $\nothing$
this will not change the flow.
  \lustreWhen

Recursive flows are implemented in Lustre using \texttt{pre} (previous) and \texttt{->} (followed by).
We \emph{could} implement these directly in Simply RaTT, but in many Lustre programs \texttt{pre} and \texttt{->} are used in a pattern that is very natural to Simply RaTT programs: \texttt{pre} is used to keep track of some state (that we may update) and \texttt{->} provides the initial state.
As an example, consider the Lustre node that computes the flow of natural numbers:
\begin{center}
  \texttt{n = 0 -> pre(n) + 1}
\end{center}
The equivalent Simply RaTT program is 
  \lustreNats
which is similarly composed of an initial state and then the actual computation, which may refer to the previous value. 

As another example, consider the following Lustre node which takes a Boolean flow \texttt{b} as input:
\begin{center}
  \texttt{edge = false -> (b and not pre(b))}
\end{center}
This output flow is true when it detects a ``rising edge'' in its input flow, i.e., when the input goes from false to true. 
This is translated in a similar way to a helper function that does the computation:
  \lustreEdgeAux
and then a function giving the initial state:
  \lustreEdge

Since a flow is restricted to its internal clock, it may not produce anything at many ticks of the basic clock.
To alleviate this, Lustre provides the \texttt{current} operator, which given a flow on a clock slower than the basic clock, fills the holes in the flow with whatever the latest values was. The equivalent Simply RaTT program is:
 \lustreCurrent 

 % To further show how to implement Lustre nodes in Simply RaTT, we implement the  
 Our last example is an implementation of \texttt{counter} from the Lustre V6 manual \cite{lustreManual}.
% The \texttt{switch} program is \emph{not} the same as the $\Varid{switch}$ program describe in Section \ref{sec:stream_library}, but rather implements the behaviour of a toggle switch.
% Its current state, either on (true) or off (false), is stored in $\Varid{pre}$.
% It then listens to two flows, one ``reset'' flow, and one ``set'' flow.
% If the current state is off and it receives true on the set flow, it will set the state to on.
% If it is on and receives true on the reset signal, it will turn off, and otherwise it will continue in the same state.
%  \lustreSwitch
The \texttt{counter} program takes as input an initial value and a constant that determines how much to increment in each step.
Its current state is stored in $\Varid{pre}$.
It then listens to two flows, an ``increment'' flow and a ``reset'' flow.
If the counter receives true on the increment flow, it increments the counter by the increment constant.
If it receives true on the reset flow, it resets the counter to the initial value and otherwise it will continue in the same state.
  \lustreCounter

\section{Metatheory}
\label{sec:metatheory}

\begin{figure}
  \begin{minipage}[b]{1.0\linewidth}
    \begin{align*}
      \vinterp{\Nat}{\sigma}{\ol{H}} &= \setcom{\overline{n}}{n \in \mathbb{N} } \\
      \vinterp{1}{\sigma}{\ol{H}} &= \{ \unit \} \\
      \vinterp{A \times B}{\sigma}{\ol{H}} &= \setcom{ \pair{v_1}{v_2}}{v_1 \in \vinterp{A}{\sigma}{\ol{H}} \land v_2 \in \vinterp{B}{\sigma}{\ol{H}} } \\
      \vinterp{A + B}{\sigma}{\ol{H}} &= \setcom{\interm_1 \, v}{v \in \vinterp{A}{\sigma}{\ol{H}} \} \cup \{\interm_2 \, v \mid v \in \vinterp{B}{\sigma}{\ol{H}}} \\
      \vinterp{A \to B}{\sigma}{\ol{H}} &= \setcom{ \lambda
        x.t}{\forall \world{\sigma'}{\ol{H}'} \heapge \world{\trim{\sigma}}{\ol{H}}.\forall w \in \vinterp{A}{\sigma'}{\ol{H}'}.t[w/x] \in \einterp{B}{\sigma'}{\ol{H}'} } \\
      \vinterp{\Box A}{\sigma}{\ol{H}} &= \setcom{ v}{\forall \ol{H}' \suffix \ol{H}. \unbox(v) \in
        \einterp{A}{\lock}{\ol{H}'} } \\
      \vinterp{\Delay  A}{\sigma}{()} &= \setcom{l}{l \text{ is any heap location}} \\
      \vinterp{\Delay  A}{\sigma}{H;\ol{H}} &=
      \setcom{ l}{\forall \eta \in H. \sigma(l) \in \einterp{A}{\trim{\sigma}\tick\eta}{\ol{H}} } \\
      \vinterp{\mu\alpha.A}{\sigma}{\ol{H}} &= \setcom{ \into(v)}{ v \in
        \vinterp{A[\Delay(\mu\alpha.A)/\alpha]}{\sigma}{\ol{H}} }
      \\[.7em]
      \einterp{A}{\sigma}{\ol{H}} &=
      \setcom{t}{\forall \world{\sigma'}{\ol{H}'}  \tickge
        \world{\sigma}{\ol{H}}. \exists
        \sigma'', v . \hevalSingle{t}{\sigma'}{v}{\sigma''} \land v \in
        \vinterp{A}{\sigma''}{\ol{H}'}}\\[0.7em]
      \cinterp{\cdot}{\bot}{\ol{H}} &= \{\star\}\\ 
      \cinterp{\Gamma,x:A}{\sigma}{\ol{H}} &=
      \setcom{\gamma[x\mapsto v]}{\gamma \in \cinterp{\Gamma}{\sigma}{\ol{H}}, v
        \in\vinterp{A}{\sigma}{\ol{H}}} \\
      \cinterp{\Gamma,\tick}{\lock\eta_N\tick\eta_L}{\ol{H}} &= \cinterp{\Gamma}{\lock\eta_N}{\set{\eta_L},\ol{H}} \\
      \cinterp{\Gamma,\lock}{\sigma}{\ol{H}} &=
      \bigcup_{\ol{H} \suffix \ol{H}'}\cinterp{\Gamma}{\bot}{\ol{H}'}
      \hspace{1cm}\text{if } \sigma \neq \bot
    \end{align*}
  \end{minipage}
  \hspace{-0.3\linewidth}
  \begin{minipage}[b]{0.25\linewidth}
    \begin{align*}
      \intertext{\textsc{Garbage Collection:}}
      \trim{\bot} &= \bot\\ \trim{\lock\eta_L} &= \lock\eta_L\\
      \trim{\lock\eta_N\tick\eta_L} &= \lock\eta_L
    \end{align*}
    \vspace{0.4cm}
  \end{minipage}
  \caption{Logical Relation.}
  \label{fig:logical_relation}
\end{figure}

Since the operational semantics rules out space leaks by
construction, it only remains to be shown that the type system is
sound, i.e., well-typed terms never get stuck.

To this end, we devise a Kripke logical relation. Essentially, such a
logical relation is a family $\sem{A}_w$ of sets of closed terms that
satisfy the desired soundness property. This family of sets is indexed
by $w$ drawn from a suitable set of `worlds' and is defined
inductively on the structure of the type $A$ and $w$. Then the proof
of soundness is reduced to a proof that $\hastype{}{t}{A}$ implies
$t \in \sem{A}_w$ for all possible worlds.

\subsection{Worlds}
\label{sec:worlds}

To a first approximation, the worlds in our logical relation contain
two components: a store $\sigma$ and a number $n$, written
$\sem{A}^{n}_\sigma$. The number index $n$ allows us to define the
logical relation for recursive types via step-indexing
\citep{appel01indexed}. Concretely, this is achieved by defining
$\sem{\Delay A}^{n+1}_\sigma$ in terms of $\sem{A}^{n}_\sigma$. Since
unfolding recursive types $\mu \alpha. A$ to
$A[\Delay\mu \alpha. A/\alpha]$ introduces a $\Delay$ modality, we
thus achieve that the step index $n$ decreases for recursive types. In
essence, this means that terms in the logical relation
$\sem{A}^{n}_\sigma$ can be executed safely for the next $n$ time
steps starting with the store $\sigma$. Ultimately, the index $\sigma$
enables us to prove that the garbage collection performed by the
stream and stream transducer semantics (cf.\
\autoref{fig:stream_machine}) is sound.

While this setup would be sufficient to prove soundness for the stream
semantics (\autoref{thr:productivity}), it is not enough for the
soundness of the stream transducer semantics
(\autoref{thr:causality}): To characterise our soundness property it
is not enough to require that a term can be executed $n$ more time
steps. We also need to know what the input to a stream transducer of
type $\Str A \to \Str B$ looks like, namely a stream where the first
$n$ elements are values of type $A$. We achieve this by describing
what the heaps should look like in the next $n$ time
steps. Concretely, we assume a finite sequence $(H_1;\dots;H_n)$,
where each $H_i$ is the set of heaps that we could potentially
encounter $i$ time steps into the future.

To summarise, the worlds in our logical relation consist of a store
$\sigma$ and a finite sequence $(H_1;\dots;H_n)$ of sets of
heaps. Instead of, $(H_1;\dots;H_n)$ we also write $\ol H$, and we use
the notation $\world{\sigma}{\ol H}$ to refer to the world consisting
of the store $\sigma$ and the sequence $\ol H$. Intuitively, $\sigma$
is the store for which the term in the logical relation can be safely
executed, whereas each $H_i$ in $\ol H$ contains all heaps for which
the term can be safely executed after $i$ time steps have passed.

A crucial ingredient of a Kripke logical relation is a preorder
$\lesssim$ on the set of worlds such that the logical relation is
closed under that preorder in the sense that $w \lesssim w'$ implies
$\sem{A}_w \subseteq \sem{A}_{w'}$. To this end, we use a partial
order $\heaple$ on heaps, which is the standard partial order on
partial maps, i.e., $\eta \heaple \eta'$ iff $\eta(l) = \eta'(l)$ for
all $l \in \dom \eta$. Moreover, we extend this order to stores in two
different ways, resulting in the two orders $\heaple$ and $\tickle$:
\begin{mathpar}
  \inferrule*
  {~}
  {\bot \heaple \bot}
  \and
  \inferrule*
  {\eta \heaple \eta'}
  {\lock \eta \heaple \lock\eta'}
  \and
  \inferrule*
  {\eta_N \heaple \eta'_N\\\eta_L \heaple \eta'_L}
  {\lock\eta_N \tick \eta_L \heaple \lock\eta'_N\tick\eta'_L}
  \and
  \inferrule*
  {\sigma \heaple \sigma'}
  {\sigma \tickle \sigma'}
  \and
  \inferrule*
  {\eta \heaple \eta'}
  {\lock\eta \tickle \lock\eta''\tick\eta'}
\end{mathpar}
That is, the heap order $\heaple$ is lifted to stores pointwise,
whereas $\tickle$ extends $\heaple$ by defining
$\lock \eta_L \tickle \lock \eta_N\tick \eta_L$. The more general order
$\tickle$ is used in the logical relation, whereas the more
restrictive $\heaple$ is needed to characterise the following property
of the operational semantics:
\begin{lemma} 
  \label{lem:machine_monotone}
  Given any term $t$, value $v$, and pair of stores $\sigma,\sigma'$
  such that $\hevalSingle{t}{\sigma}{v}{\sigma'}$, then
  $\sigma \heaple \sigma'$.
\end{lemma}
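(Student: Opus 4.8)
The plan is to prove the lemma by induction on the derivation of $\hevalSingle{t}{\sigma}{v}{\sigma'}$, using the rules in \autoref{fig:machine}. Before starting the induction I would record two elementary facts about the store order. First, $\heaple$ is a preorder: reflexivity and transitivity are inherited from the standard partial order on partial maps and are checked by a routine case split on the shape of the store ($\bot$, $\lock\eta$, or $\lock\eta_N\tick\eta_L$). Second, allocating a fresh location only grows a store: whenever $\sigma \neq \bot$ and $l = \allocate{\sigma}$, so that $l \nin \dom{\later{\sigma}}$, we have $\sigma \heaple \sigma, l\mapsto t$, since extending $\later{\sigma}$ at a fresh location leaves every existing mapping untouched and does not alter the now-heap. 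Note also that, because $\heaple$ only relates stores of the same shape, establishing $\sigma \heaple \sigma'$ automatically entails that evaluation preserves the shape of the store; this observation is used tacitly throughout.

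Most cases are routine. The axiom $\hevalSingle{v}{\sigma}{v}{\sigma}$ is reflexivity. Every rule that threads the store sequentially through its premises -- pairing, projection, injection, $\sym{case}$, application, addition, $\into$, and $\out$ -- follows by applying the induction hypothesis to each premise and chaining the resulting inequalities by transitivity. For $\delay\,t$ the conclusion store is $\sigma, l\mapsto t$ with $l = \allocate{\sigma}$, so the fresh-allocation fact gives $\sigma \heaple \sigma, l \mapsto t$ at once. The $\unbox$ rule for fixed points combines both ideas: its second premise is evaluated in $\sigma, l\mapsto \unbox(\fix\,x.t')$, and the induction hypothesis yields $\sigma, l\mapsto\unbox(\fix\,x.t') \heaple \sigma'$; composing this with $\sigma \heaple \sigma, l\mapsto\unbox(\fix\,x.t')$ by transitivity gives $\sigma \heaple \sigma'$. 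The other $\unbox$ rule and the $\promote$ rule evaluate an auxiliary premise in the empty store $\bot$ and leave the conclusion store either unchanged or determined by a single surviving premise, so they reduce to reflexivity or to one application of the induction hypothesis.

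The one case that needs genuine care -- and where I expect the only real subtlety -- is $\adv$ (and the analogous $\progress$), because there the store changes shape between the premises and the conclusion. For $\hevalSingle{\adv\,t}{\lock\eta_N\tick\eta_L}{v}{\sigma'}$ the first premise is evaluated in the one-heap store $\lock\eta_N$, yielding by induction $\lock\eta_N \heaple \lock\eta_N'$, i.e.\ $\eta_N \heaple \eta_N'$; the second premise runs in $\lock\eta_N'\tick\eta_L$ and gives $\lock\eta_N'\tick\eta_L \heaple \sigma'$. Since $\heaple$ relates only stores of equal shape, $\sigma'$ must have the form $\lock\eta_N''\tick\eta_L''$ with $\eta_N' \heaple \eta_N''$ and $\eta_L \heaple \eta_L''$. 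Transitivity on the now-heaps gives $\eta_N \heaple \eta_N''$, and $\eta_L \heaple \eta_L''$ is immediate, so $\lock\eta_N\tick\eta_L \heaple \sigma'$ as required. The $\progress$ case is simpler: the later heap $\eta_L$ is carried through unchanged and only the now-heap grows, so monotonicity of the now-heap together with reflexivity on $\eta_L$ suffices. The main obstacle, such as it is, lies precisely in correctly re-assembling these per-heap inequalities after the shape shift in the $\adv$ rule; everything else is bookkeeping.
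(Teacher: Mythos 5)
Your proof is correct: it is the straightforward induction on the evaluation derivation, using reflexivity/transitivity of $\heaple$, freshness of allocation for the $\delay$ and fixed-point $\unbox$ rules, and the per-heap shape analysis needed for the $\adv$ and $\progress$ rules. The paper states this lemma without any written proof, treating exactly this routine induction as implicit, so your argument coincides with the paper's intended one.
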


We also extend the heap order $\heaple$ to sets of heaps and sequences
of sets of heaps:
\[
\begin{array}{rclcl}
  H &\heaple& H' &\iff& \forall \eta' \in H'.\exists \eta \in
  H. \eta \heaple \eta'\\
  (H_1;H_2;\dots;H_n) &\heaple& (H'_1;H'_2;\dots;H'_n) &\iff&
  \forall 1 \le i\le n. H_i \heaple H'_i
\end{array}
\]
That is, if $H \heaple H'$, then for every heap in $H'$ there is a
smaller one in $H$, and this ordering is lifted pointwise to finite
sequences.

Finally, we combine these orderings to worlds pointwise as well
\[
  \begin{array}{rllclcl}
    \world{\sigma}{\ol{H}} &\heaple& \world{\sigma'}{\ol{H}'} &\iff&
                                                                     \sigma \heaple \sigma' &\land& \ol{H} \heaple \ol{H}'\\
    \world{\sigma}{\ol{H}} &\tickle& \world{\sigma'}{\ol{H}'} &\iff& \sigma \tickle \sigma' &\land& \ol{H} \heaple \ol{H}'
  \end{array}
\]

Note that whenever $\ol{H} \heaple \ol{H}'$, then $\ol{H}$ and
$\ol{H}'$ are of the same length. That is, both $\ol{H}$ and $\ol{H}'$
describe the same number of future time steps. In order to describe a
possible future world, i.e., after some time steps have passed, we use
suffix ordering $\suffix$ on sequences. We say that $\ol H$ is a
suffix of $\ol H'$, written $\ol H \suffix \ol H'$ iff there is a
sequence $\ol H''$ such that $H'$ is equal to the concatenation of
$\ol H''$ and $\ol H$, written $\ol H'';\ol H$. Thus, a suffix
$\ol H \suffix \ol H'$ describes a future state where the prefix
$\ol H''$ has already been consumed.

\subsection{Logical Relation}
\label{sec:logical-relation}

Our logical relation consists of two parts: A \emph{value relation}
$\vinterp{A}{w}{}$ that contains all values that semantically inhabit
type $A$ at the world $w$, and a corresponding \emph{term relation}
$\einterp{A}{w}{}$ containing terms. Given a world
$\world{\sigma}{\ol H}$ we write $\vinterp{A}{\sigma}{\ol H}$ and
$\einterp{A}{\sigma}{\ol H}$ instead of
$\vinterp{A}{\world{\sigma}{\ol H}}{}$ and
$\einterp{A}{\world{\sigma}{\ol H}}{}$, respectively. The two
relations are defined by mutual induction in
\autoref{fig:logical_relation}. More precisely, the two relations
are defined by well-founded induction by the lexicographic ordering on
the triple $(\wlen{\ol{H}},\tsize{A},e)$, where $\wlen{\ol{H}}$ is the
length of $\ol{H}$, $\tsize{A}$ is the size of $A$ defined below, and
$e = 1$ for the term relation and $e = 0$ for the value
relation.
\begin{align*}
  \tsize{\alpha} &= \tsize{\Delay A} = \tsize{\Nat} = \tsize{1} = 1\\
  \tsize{A \times B} &= \tsize{A + B} = \tsize{A \to B} = 1 +
  \tsize{A} + \tsize{B}\\
  \tsize{\Box A} &= \tsize{\mu \alpha. A} =  1 +
  \tsize{A}
\end{align*}
We define the size of $\Delay A$ to be the same as $\alpha$. Thus
$A[\Delay\mu \alpha. A/\alpha]$ is strictly smaller than
$\mu \alpha. A$. This justifies the well-foundedness for recursive
types. For types $\Delay A$, the well-foundedness of the definition
can be observed by the fact that $\ol{H}$ is strictly shorter than
$H;\ol{H}$, which is a shorthand notation for the sequence
$(H;H_1;\dots;H_n)$ where $\ol{H} = (H_1;\dots;H_n)$.

The definition of the value relation for $\unit, \Nat, \times$, and
$+$ is standard. The definition of $\vinterp{\Box A}{\sigma}{\ol{H}}$
expresses the fact that all its inhabitants can be evaluated safely at
any time in the future. To express this, we use suffix ordering
$\suffix$. A value in $\vinterp{\Box A}{\sigma}{\ol{H}}$ may be
unboxed and subsequently evaluated at any time in the future, i.e., in
the context of any suffix of $\ol{H}$.

The value relation for types $\Delay A$ encapsulates the soundness of
garbage collection. The set $\vinterp{\Delay A}{\sigma}{H;\ol{H}}$
contains all heap locations that point to terms that can be executed
safely in the next time step. The notation $\sigma(l)$ is a shorthand
for $\eta_L(l)$ given that $\sigma = \lock \eta_L$ or
$\sigma = \lock\eta_N\tick\eta_L$. Hence, we look up the location $l$
in the `later' heap of $\sigma$ and require that the term that we find
can be executed with the store obtained from $\sigma$ by first garbage
collecting the `now' heap (if present) and extending it with any
future heap drawn from $H$.

Garbage collection is also crucial in the definition of
$\vinterp{A \to B}{\sigma}{\ol{H}}$, which only contains lambda
abstractions that can be applied in a garbage collected store. This
reflects the restriction of the typing rule for lambda abstraction,
which requires the context $\Gamma$ to be tick-free. The
$\Varid{leaky}$ example in \autoref{sec:counterexamples} illustrates
the necessity of this restriction. Semantically, this implies the
following essential property of values:
\begin{lemma}
  \label{lem:vinterpTrim}
  For all $A, \sigma, \ol{H}$, we have that
  $\vinterp{A}{\sigma}{\ol{H}} \subseteq
  \vinterp{A}{\trim{\sigma}}{\ol{H}}$.
\end{lemma}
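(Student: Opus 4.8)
The plan is to prove the inclusion by well-founded induction on the same lexicographic ordering $(\wlen{\ol{H}},\tsize{A},e)$ used to define the relation, specialised to the value relation (so $e=0$). The whole argument rests on two elementary properties of the garbage-collection operation: it is \emph{idempotent}, $\trim{\trim{\sigma}} = \trim{\sigma}$, and it \emph{preserves the later heap}, $\later{\trim{\sigma}} = \later{\sigma}$. Both are immediate from the three defining clauses of $\trim{\cdot}$. The second property gives in particular $\trim{\sigma}(l) = \sigma(l)$ for every heap location $l$, since $\sigma(l)$ is by definition the lookup of $l$ in $\later{\sigma}$.

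With these facts in hand I would inspect each clause of $\vinterp{A}{\sigma}{\ol{H}}$ and observe that most of them are in fact \emph{equalities} $\vinterp{A}{\sigma}{\ol{H}} = \vinterp{A}{\trim{\sigma}}{\ol{H}}$. The clauses for $\Nat$, $1$, and $\Box A$, as well as the clause for $\Delay A$ at the empty future $()$, do not mention $\sigma$ at all (the $\Box$ clause evaluates $\unbox(v)$ in the fixed store $\lock$), so they hold verbatim for $\trim{\sigma}$. The clause for $A \to B$ already refers only to $\trim{\sigma}$ on its right-hand side, so by idempotence it is literally the same set for $\sigma$ and for $\trim{\sigma}$. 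The only clause that appears to depend on $\sigma$ directly is $\Delay A$ at $H;\ol{H}$, which requires $\sigma(l) \in \einterp{A}{\trim{\sigma}\tick\eta}{\ol{H}}$ for all $\eta \in H$; but both occurrences of the store there are already filtered through $\trim{\cdot}$ (once as the lookup value, via $\trim{\sigma}(l) = \sigma(l)$, and once as $\trim{\trim{\sigma}} = \trim{\sigma}$ inside the world), so again the conditions for $\sigma$ and $\trim{\sigma}$ coincide exactly.

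The remaining cases are the genuinely inductive ones, and they are where the induction hypothesis is used. For $A \times B$ and $A + B$ the type size strictly decreases while $\ol{H}$ is unchanged, so the IH applies to the component types: a value $\pair{v_1}{v_2} \in \vinterp{A \times B}{\sigma}{\ol{H}}$ has $v_1 \in \vinterp{A}{\sigma}{\ol{H}}$ and $v_2 \in \vinterp{B}{\sigma}{\ol{H}}$, hence (by IH) $v_1 \in \vinterp{A}{\trim{\sigma}}{\ol{H}}$ and $v_2 \in \vinterp{B}{\trim{\sigma}}{\ol{H}}$, placing the pair in $\vinterp{A \times B}{\trim{\sigma}}{\ol{H}}$; the sum case is identical. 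For $\mu\alpha.A$ I would appeal to the observation recorded just before the lemma, namely that $\tsize{A[\Delay(\mu\alpha.A)/\alpha]} < \tsize{\mu\alpha.A}$, so the IH applies to the unfolded type and an element $\into(v)$ of the left-hand side has $v \in \vinterp{A[\Delay(\mu\alpha.A)/\alpha]}{\trim{\sigma}}{\ol{H}}$, giving $\into(v) \in \vinterp{\mu\alpha.A}{\trim{\sigma}}{\ol{H}}$.

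I do not expect a serious obstacle here: the content of the lemma is precisely that the value relation never inspects the `now' heap, and this is visible syntactically in the clauses of \autoref{fig:logical_relation}. The one point requiring care is the well-foundedness bookkeeping---checking that every appeal to the IH strictly decreases the measure $(\wlen{\ol{H}},\tsize{A},0)$---together with the verification that the $\Delay A$ clause is genuinely $\trim{\cdot}$-saturated rather than merely $\trim{\cdot}$-monotone, which is what lets that case be handled by an equality instead of needing a separate monotonicity argument.
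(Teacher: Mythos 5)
Your proof is correct. The paper actually states Lemma~\ref{lem:vinterpTrim} without giving a proof, so there is no official argument to compare against; your argument --- that $\trim{\cdot}$ is idempotent and preserves the later heap (hence $\trim{\sigma}(l)=\sigma(l)$), so that every clause of the value relation is invariant under replacing $\sigma$ by $\trim{\sigma}$, with the $\times$, $+$, and $\mu$ cases discharged by the induction hypothesis --- is exactly the routine verification the authors elided, and each of your appeals to the induction hypothesis strictly decreases the measure $(\wlen{\ol{H}},\tsize{A},e)$, so the well-founded induction goes through.
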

\noindent
That is, after evaluating a term to a value, we can safely garbage
collect the `now' heap.

% As explained in \autoref{sec:term-semantics}, the structure of stores
% mirrors the structure of contexts. The correspondence between the two
% is made explicit in the definition of the logical relation for
% contexts given in Figure~\ref{fig:logical_relation}. We can observer
% that the logical relation $\cinterp{\Gamma}{\sigma}{\ol H}$ is defined
% iff $\Gamma$ and $\sigma$ are \emph{compatible}. We say that $\Gamma$
% and $\sigma$ are \emph{compatible} iff they have the same number of
% locks and $\sigma$ does not have fewer ticks than $\Gamma$.

% The fact that we don't have a strict compatibility between contexts
% and stores -- namely a store may contain a tick but the compatible
% context need not -- is due to the requirement that all three logical
% relations need to be closed under the world ordering $\tickle$, which
% may introduce a tick:
% \begin{lemma}
%   Given a pair of worlds
%   $\world{\sigma}{\ol{H}} \tickle \world{\sigma'}{\ol{H}'}$, we have that
%   \[
%     \vinterp{\Gamma}{\sigma}{\ol{H}} \subseteq
%     \vinterp{\Gamma}{\sigma'}{\ol{H}'}
%     \qquad
%     \einterp{\Gamma}{\sigma}{\ol{H}} \subseteq
%     \einterp{\Gamma}{\sigma'}{\ol{H}'}
%     \qquad
%     \cinterp{\Gamma}{\sigma}{\ol{H}} \subseteq
%     \cinterp{\Gamma}{\sigma'}{\ol{H}'}
%   \]
% \end{lemma}

Finally, we obtain the soundness of the language by the following
fundamental property of the logical relation
$\einterp{A}{\sigma}{\ol{H}}$.
\begin{theorem}[Fundamental Property]
  \label{thr:lrl}
  Given $\hastype{\Gamma}{t}{A}$, and
  $\gamma \in \cinterp{\Gamma}{\sigma}{\ol{H}}$, then
  $t\gamma \in \einterp{A}{\sigma}{\ol{H}}$.
\end{theorem}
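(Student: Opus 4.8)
The plan is to prove the Fundamental Property by induction on the typing derivation $\hastype{\Gamma}{t}{A}$, after first setting up the structural lemmas that any Kripke logical relation argument needs. Concretely, I would first establish: \emph{(a) Monotonicity}, that whenever $\world{\sigma}{\ol{H}} \tickle \world{\sigma'}{\ol{H}'}$ we have $\vinterp{A}{\sigma}{\ol{H}} \subseteq \vinterp{A}{\sigma'}{\ol{H}'}$ and likewise for $\einterp{A}{\sigma}{\ol{H}}$ and $\cinterp{\Gamma}{\sigma}{\ol{H}}$, proved by mutual induction on $A$ using the fact that the arrow and $\Box$ clauses are already phrased over future worlds (the term relation is automatically closed under $\tickge$ since those worlds form a sub-collection); \emph{(b)} that values inhabit the term relation, $\vinterp{A}{\sigma}{\ol{H}} \subseteq \einterp{A}{\sigma}{\ol{H}}$, which is immediate since a value evaluates to itself and (a) applies; \emph{(c)} a context analogue of \autoref{lem:vinterpTrim}, namely $\cinterp{\Gamma}{\sigma}{\ol{H}} \subseteq \cinterp{\Gamma}{\trim{\sigma}}{\ol{H}}$ for tick-free $\Gamma$, obtained by applying \autoref{lem:vinterpTrim} pointwise; and \emph{(d)} a stability lemma, that if $A \; \stable$ then membership in $\vinterp{A}{\sigma}{\ol{H}}$ is independent of $\sigma$ and is preserved into future worlds, which is what makes $\progress$ and $\promote$ sound. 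With these in hand each typing rule becomes a local obligation: assume the semantic interpretations of the premises (the induction hypotheses) and derive that of the conclusion.

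For the purely structural rules the argument is routine but instructive about where the supporting lemmas enter. For a variable $x$, membership follows directly from the definition of $\cinterp{\Gamma,x:A,\Gamma'}{\sigma}{\ol{H}}$ together with (b), using that $\Gamma'$ is token-free so the world is unchanged. For application I unfold $\einterp{B}{\sigma}{\ol{H}}$ at an arbitrary future world, evaluate $t\gamma$ to a $\lambda$-abstraction and $t'\gamma$ to a value $w$, tracking each store extension with \autoref{lem:machine_monotone} and transporting the two interpretations along it by (a); to discharge the arrow clause I then trim $w$ with \autoref{lem:vinterpTrim} so that it lives over the garbage-collected store $\trim{\sigma}$ for which the clause is stated. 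The $\lambda$-abstraction rule is dual and the most delicate structural case: since $\vinterp{A \to B}{\sigma}{\ol{H}}$ quantifies over worlds above $\world{\trim{\sigma}}{\ol{H}}$, I must move the ambient substitution $\gamma$ through garbage collection before extending it with the argument and invoking the induction hypothesis on the body, which is exactly where the side condition $\tickfree{\Gamma}$ and lemma (c) are used. Pairs, injections, projections and case are straightforward compositions of the same moves.

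The modal rules carry the real content. For $\delay\,t$ I evaluate in a non-$\bot$ store, allocating a fresh $l$ bound to $t\gamma$; to place $l$ in $\vinterp{\Delay A}{\sigma}{H;\ol{H}}$ I must show $\sigma(l) = t\gamma$ runs safely in $\trim{\sigma}\tick\eta$ for each $\eta \in H$, which I obtain from the induction hypothesis for $\hastype{\Gamma,\tick}{t}{A}$ after rewriting $\cinterp{\Gamma,\tick}{\trim{\sigma}\tick\eta}{\ol{H}}$ as $\cinterp{\Gamma}{\trim{\sigma}}{\set{\eta},\ol{H}}$ and appealing to (c) and to the set ordering $H \heaple \set{\eta}$; note that the index $\ol{H}$ strictly shrinks, matching the well-founded ordering used to define the relation. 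The rule for $\adv$ is dual: the membership of the subterm's value in $\vinterp{\Delay A}{\cdot}{\cdot}$ is precisely the guarantee that the heap location it produces can be dereferenced and run in $\lock\eta_N\tick\eta_L$, which is the semantic expression of the soundness of garbage collection. For $\rbox$ and $\unbox$ I use the $\Box$-clause, whose suffix quantification forces evaluation in the store $\lock$; soundness of $\unbox$ follows by instantiating that clause, while soundness of $\rbox\,t$ reduces, via the definition of $\cinterp{\Gamma,\lock}{\sigma}{\ol{H}}$ as a union over suffixes of $\cinterp{\Gamma}{\bot}{\ol{H}'}$, to the induction hypothesis for $\hastype{\Gamma,\lock}{t}{A}$. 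The rules $\progress$ and $\promote$ follow from the stability lemma (d), and $\into$/$\out$ are essentially definitional once one observes that the size measure $\tsize{-}$ strictly decreases on $A[\Delay(\mu\alpha.A)/\alpha]$.

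The main obstacle is the fixed-point rule, where $\hastype{\Gamma,\lock,x:\Delay A}{t}{A}$ must yield $\fix\,x.(t\gamma) \in \vinterp{\Box A}{\bot}{\ol{H}}$ (here $\Gamma$ is token-free, so the context relation forces $\sigma = \bot$). This cannot be closed by the outer induction alone; instead I would run a secondary, L\"ob-style induction on $\wlen{\ol{H}'}$, the length of the future appearing when the $\Box$-clause is unfolded at a suffix $\ol{H}' \suffix \ol{H}$. The operational rule for unboxing a fixed point allocates $l \mapsto \unbox(\fix\,x.(t\gamma))$ and then evaluates $(t\gamma)[l/x]$; the point is that $l$ has type $\Delay A$, so to place it in $\vinterp{\Delay A}{\cdot}{H_0;\ol{H}''}$ I only need that $\unbox(\fix\,x.(t\gamma))$ runs safely over the \emph{strictly shorter} future $\ol{H}''$, which is exactly the inner induction hypothesis. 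Feeding this $l$ into the main induction hypothesis for the body $t$, with $\gamma$ extended by $x \mapsto l$ and crossing the $\lock$ via (c), closes the loop. The delicate bookkeeping is ensuring that the future index genuinely decreases at every unfolding — which is what the choices $\tsize{\Delay A} = \tsize{\alpha}$ and the shrinking of $\ol{H}$ were engineered to guarantee — and that the freshly allocated location $l = \allocate{\sigma}$ does not clash with existing bindings, which is handled by the allocation discipline. Once the fixed-point case goes through, \autoref{thr:productivity} and \autoref{thr:causality} follow by instantiating the Fundamental Property at the appropriate worlds describing the stream (transducer) input heaps.
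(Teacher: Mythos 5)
Your proposal is correct and follows essentially the same route as the paper: the paper proves \autoref{thr:lrl} by a ``lengthy but entirely standard'' induction on the typing derivation, whose two crucial ingredients — closure of all three relations under the world ordering $\tickle$ (your lemma (a)) and the correspondence between tokens in $\Gamma$ and the shape of $\sigma$ enforced by $\cinterp{\Gamma}{\sigma}{\ol{H}}$ (which you use, e.g., to force $\sigma = \bot$ in the $\fix$ case) — are exactly the pillars of your argument. The case analysis you supply, including the trimming moves for application/abstraction, the $H \heaple \set{\eta}$ step for $\delay$, and the inner induction on the length of the future sequence for $\fix$ (matching the well-founded ordering on $(\wlen{\ol{H}},\tsize{A},e)$ used to define the relation), fills in details the paper omits and is consistent with its definitions.
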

The theorem is proved by a lengthy but entirely standard induction on
the typing relation $\hastype{\Gamma}{t}{A}$. Two crucial ingredients
to the proof are that all logical relations are closed under the
ordering $\tickle$ on worlds, and that
$\cinterp{\Gamma}{\sigma}{\ol{H}}$ captures the correspondence between
the tokens occurring in $\Gamma$ and $\sigma$, namely they have the
same number of locks and $\sigma$ may not have fewer ticks than
$\Gamma$.

\subsection{Soundness of Stream and Stream Transducer Semantics}
\label{sec:prod-caus}

We conclude this section by demonstrating how we can use the
fundamental property of our logical relation for proving the
soundness of the abstract machines for evaluating streams
(\autoref{thr:productivity}) and stream transducers
(\autoref{thr:causality}), which amounts to proving productivity and
causality of the calculus.

First, we observe that the operational semantics is deterministic:
\begin{proposition}[deterministic machine]~
\label{prop:determ}  
  \begin{enumerate}
  \item If $\hevalSingle{t}{\sigma}{v_1}{\sigma_1}$ and
    $\hevalSingle{t}{\sigma}{v_2}{\sigma_2}$, then $v_1 = v_2$ and
    $\sigma_1 = \sigma_2$.
  \item If $\state{t}{\eta} \forwards{v_1} \state{t_1}{\eta_1}$ and
    $\state{t}{\eta} \forwards{v_2} \state{t_2}{\eta_2}$, then
    $v_1=v_2$, $t_1 = t_2$, and $\eta_1 = \eta_2$.
  \item If $\state{t}{\eta} \forward{v}{v_1} \state{t_1}{\eta_1}$ and
    $\state{t}{\eta} \forward{v}{v_2} \state{t_2}{\eta_2}$, then
    $v_1=v_2$, $t_1 = t_2$, and $\eta_1 = \eta_2$.
  \end{enumerate}
\end{proposition}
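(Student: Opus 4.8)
The plan is to prove the central part~(1)---determinism of the big-step term semantics---by induction, and then obtain parts~(2) and~(3) as direct corollaries. The rules of \autoref{fig:machine} are \emph{almost} syntax-directed: the top-level constructor of $t$ determines the applicable rule, with exactly two sources of ambiguity that must be reconciled. To prepare for the first of these, I would first establish an auxiliary \emph{value lemma}: for every value $v$ and store $\sigma$, if $\hevalSingle{v}{\sigma}{w}{\sigma'}$ then $w = v$ and $\sigma' = \sigma$. This is proved by induction on the structure of $v$. For the atomic value forms ($\unit$, $\bar n$, $\lambda x.t$, $\rbox\,t$, $\fix\,x.t$, and heap locations $l$) no rule other than the value rule has a matching conclusion, so the claim is immediate. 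For the compound forms $\pair{v_1}{v_2}$, $\interm_i\,v'$, and $\into\,v'$, a derivation may instead end with the corresponding congruence rule; but then its premises evaluate the strictly smaller values $v_1,v_2$ (resp.\ $v'$), which by the induction hypothesis evaluate to themselves leaving the store unchanged, so the congruence rule also yields exactly $(v,\sigma)$.

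With the value lemma in hand, I would prove part~(1) by well-founded induction on the derivation $D_1$ of $\hevalSingle{t}{\sigma}{v_1}{\sigma_1}$, doing a case analysis on its last rule and examining which rules could conclude a second derivation $D_2$ of $\hevalSingle{t}{\sigma}{v_2}{\sigma_2}$. In all syntax-directed cases the shape of $t$ (and of $\sigma$, which decomposes uniquely into its $\lock$/$\tick$ components) forces the same rule for $D_2$; the corresponding premises of $D_1$ and $D_2$ then agree by the induction hypothesis applied to the strictly smaller subderivations of $D_1$, and so the conclusions coincide. Where $\delay\,t$ or the fixed-point case of $\unbox$ allocates a fresh location via $l = \allocate{\sigma}$, uniqueness of $l$ follows because $\allocate{\cdot}$ is a fixed function of $\sigma$. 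The two genuine ambiguities are handled as follows. If $t$ is a compound value and $D_1$ ends with a congruence rule while $D_2$ ends with the value rule (or vice versa), the value lemma shows both produce $(t,\sigma)$. If $t = \unbox\,t'$, two rules share this conclusion, one firing when $t'$ evaluates to some $\rbox\,t''$ and one when it evaluates to some $\fix\,x.t''$; in both rules the first premise evaluates $t'$ in the store $\bot$, so by the induction hypothesis $t'$ has a unique value, whose head constructor is either $\rbox$ or $\fix$ but not both---hence the same rule fires in $D_1$ and $D_2$, and the remaining premise is deterministic by the induction hypothesis.

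Parts~(2) and~(3) then follow immediately. Each small-step rule of \autoref{fig:stream_machine} fires exactly one big-step evaluation whose starting store is determined by the source state $\state{t}{\eta}$ (and, for part~(3), by the input $v$, which fixes the store $\lock\eta,\thel \mapsto v :: \thel \tick\thel \mapsto \unit$). By part~(1) the resulting value and store of that evaluation are unique; since the produced value has the form $v :: l$, its decomposition into the emitted element and the tail location $l$ is unique, and the heaps appearing in the resulting store are likewise determined, fixing the output(s) and the successor state $\state{\adv\,l}{\eta_L}$. I expect the main obstacle to be the non-syntax-directed $\unbox$ case: making precise that the two competing rules are mutually exclusive requires invoking determinism of the subderivation that evaluates $t'$ \emph{before} the outer rule is even selected, so the induction must be organised so that this subderivation is strictly smaller than $D_1$ and thus already covered by the induction hypothesis.
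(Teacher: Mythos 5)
Your proof is correct, but there is nothing in the paper to compare it against: the authors state \autoref{prop:determ} without any proof at all (it is introduced with ``we observe that the operational semantics is deterministic'' and the text moves straight on to the value-type lemma), so what you have written supplies the routine argument they left implicit. You have correctly isolated the only two points where the rules of \autoref{fig:machine} fail to be syntax-directed: the overlap between the generic value rule and the congruence rules for pairs, injections and $\into$, which your auxiliary value lemma resolves by structural induction on the value; and the two rules concluding $\hevalSingle{\unbox\,t'}{\sigma}{v}{\sigma'}$, which you disambiguate by applying the induction hypothesis to their first premises --- both evaluate $t'$ under $\bot$, and the resulting unique value cannot be headed by both $\rbox$ and $\fix$, so the same rule must end both derivations. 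Organising the induction over the first derivation $D_1$ so that this first premise is a strictly smaller subderivation is exactly the right way to make that step legitimate. Your point that fresh-location choice is deterministic because $\allocate{\cdot}$ is a function of the store is consistent with the paper, which assumes this (and more: that $\allocate{\sigma}$ depends only on $\dom{\later{\sigma}}$, an assumption introduced for \autoref{lem:dontTouchNow}). Finally, deriving parts (2) and (3) from part (1) is immediate as you say: each small-step rule performs a single big-step evaluation whose initial store is determined by the source state $\state{t}{\eta}$ (and by the input $v$ in part (3)), and the decompositions of the resulting value $v' :: l$ and store $\lock\eta_N\tick\eta_L$ (with the reserved location $\thel$ split off in part (3)) are syntactically unique, so the emitted value and successor state are forced.
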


Before we can prove \autoref{thr:productivity}, we need the following
property of value types, i.e., types constructed from
$\Unit,\Nat, +,\times$
\begin{lemma}
  \label{lem:vinterp_value}
  Let $A$ be a value type and $\world{\sigma}{\ol H}$ a world.
  \begin{enumerate}[(i)]
  \item For all values $v$, we have that
    $v\in \vinterp{A}{\sigma}{\ol H}$ iff $\hastype{}{v}{A}$.
    \label{item:vinterp_value1}
  \item $\vinterp{A}{\sigma}{\ol H}$ is non-empty.
    \label{item:vinterp_value2}
  \end{enumerate}
\end{lemma}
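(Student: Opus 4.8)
The plan is to exploit the key structural fact that, for a value type $A$—one built only from $\Unit$, $\Nat$, $+$, and $\times$—the definition of $\vinterp{A}{\sigma}{\ol H}$ in \autoref{fig:logical_relation} never mentions the store $\sigma$ or the heap sequence $\ol H$. Hence the value relation at a value type is entirely world-independent, and it ought to coincide exactly with ordinary typing of closed values. Once this observation is in place, both parts of the lemma fall out of a routine induction on the structure of $A$, with no step-indexing or heap bookkeeping to worry about.

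For part~(i) I would argue by induction on the value type $A$, establishing both directions of the biconditional together. The base cases $A = \Unit$ and $A = \Nat$ are immediate: the relation is literally $\{\unit\}$ and $\setcom{\ol n}{n \in \nats}$, which are precisely the closed values of those types. For $A = B \times C$ the forward direction reads $v = \pair{v_1}{v_2}$ with $v_1 \in \vinterp{B}{\sigma}{\ol H}$ and $v_2 \in \vinterp{C}{\sigma}{\ol H}$ straight off the definition, so the induction hypothesis yields $\hastype{}{v_1}{B}$ and $\hastype{}{v_2}{C}$, and the product typing rule gives $\hastype{}{\pair{v_1}{v_2}}{B \times C}$. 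The backward direction first uses canonical forms, i.e.\ inversion on $\hastype{}{v}{B \times C}$ for a closed value $v$, to learn that $v = \pair{v_1}{v_2}$ with well-typed components, and then the induction hypothesis places those components in their respective value relations. The case $A = B + C$ is completely analogous, splitting on the two injections $\interm_1$ and $\interm_2$.

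Part~(ii) I would either derive from part~(i) by noting that every value type is inhabited by some closed value, or—more directly—prove by the same induction, exhibiting a canonical witness at each step: $\unit \in \vinterp{\Unit}{\sigma}{\ol H}$, $\ol 0 \in \vinterp{\Nat}{\sigma}{\ol H}$, the pair $\pair{v_1}{v_2}$ of witnesses supplied by the induction hypothesis for $B \times C$, and the left injection $\interm_1\, v_1$ of a witness for $B + C$.

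The one genuine ingredient—rather than a real obstacle—is the canonical-forms argument needed for the backward direction of part~(i): one must invert the typing judgement $\hastype{}{v}{A}$ to conclude that a closed value of a value type has the expected head constructor. Here one also uses that a heap location $l$, although a value in the runtime grammar of \autoref{fig:syntax}, is not typable by the rules of \autoref{fig:typing}, so a typable closed value is never a bare location. Beyond this standard inversion there is nothing difficult, precisely because the world indices $\sigma$ and $\ol H$ play no role whatsoever at value types.
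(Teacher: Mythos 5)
Your proposal is correct and takes essentially the same route as the paper, whose entire proof of this lemma reads ``By a straightforward induction on $A$.'' The induction you spell out---world-independence of the value relation at value types, together with canonical-forms inversion (including the observation that heap locations are untypable) for the backward direction of part~(i)---is precisely the straightforward argument the paper leaves implicit.
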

\begin{proof}
  By a straightforward induction on $A$.
\end{proof}

For the proof of \autoref{thr:productivity}, we construct for each
type $A$ the following family of sets $S_k(A)$, which intuitively
contains all states on which the stream semantics can run for $k$ more
steps:
\[
  S_k(A) = \setcom{\state{t}{\eta}}{t \in \einterp{\Str A}{\lock
      \eta\tick}{\set{\emptyset}^k}}
\]
where $\set{\emptyset}$ is the singleton set containing the empty
heap, and $\set{\emptyset}^k$ is the sequence containing $k$ copies of
$\set{\emptyset}$. \autoref{thr:productivity} follows from the
following lemma and the fact that the operational semantics is
deterministic:
\begin{lemma}[productivity]
  Given any value type $A$, the following holds for all $k \in \nats$:
  \begin{enumerate}[(i)]
  \item If
    $\hastype{}{t}{\Box(\Str A)}$ then
    $\state{\unbox\, t}{\emptyset} \in S_k(A)$.
  \item If $\pair{t}{\eta} \in S_{k+1}(A)$ then there are $t', \eta'$,
    and $\hastype{}{v}{A}$ such that
    \[
      \state{t}{\eta} \forwards{v} \state{t'}{\eta'} \text{ and }
      \state{t'}{\eta'} \in S_k(A).
    \]
  \end{enumerate}
\end{lemma}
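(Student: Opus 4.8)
The plan is to prove both parts by unfolding the definition of $S_k(A)$ together with the logical relation of \autoref{fig:logical_relation}, leaning on the Fundamental Property (\autoref{thr:lrl}) for part (i) and on careful world bookkeeping for the inductive step in part (ii). Throughout I read the store $\lock\eta\tick$ occurring in $S_k(A)$ as $\lock\eta\tick\emptyset$, i.e. now-heap $\eta$ and empty later-heap, and I use that $\Str A = \mu\alpha.(A\times\alpha)$. For part (i), the $\unbox$ rule turns $\hastype{}{t}{\Box(\Str A)}$ into $\hastype{\lock}{\unbox\,t}{\Str A}$. Unfolding the context relation gives $\cinterp{\lock}{\lock\emptyset\tick}{\set{\emptyset}^k} = \bigcup_{\set{\emptyset}^k \suffix \ol{H}'}\cinterp{\cdot}{\bot}{\ol{H}'}$, and since $\cinterp{\cdot}{\bot}{\ol{H}'} = \set{\star}$ and $\set{\emptyset}^k \suffix \set{\emptyset}^k$, the trivial environment $\star$ lies in it. Applying \autoref{thr:lrl} to $\hastype{\lock}{\unbox\,t}{\Str A}$ with $\star$ then yields $\unbox\,t \in \einterp{\Str A}{\lock\emptyset\tick}{\set{\emptyset}^k}$, which is exactly $\state{\unbox\,t}{\emptyset} \in S_k(A)$.

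For part (ii), I would first unfold $\state{t}{\eta} \in S_{k+1}(A)$ to $t \in \einterp{\Str A}{\lock\eta\tick}{\set{\emptyset}^{k+1}}$. Instantiating the term relation at its own world (reflexivity of $\tickge$) produces an evaluation $\hevalSingle{t}{\lock\eta\tick}{v'}{\sigma''}$ with $v' \in \vinterp{\Str A}{\sigma''}{\set{\emptyset}^{k+1}}$, and by \autoref{lem:machine_monotone} the store $\sigma''$ must have the two-heap form $\lock\eta_N\tick\eta_L$. The value relation for $\mu$ and $\times$ then forces $v' = v_1 :: l$ with $v_1 \in \vinterp{A}{\sigma''}{\set{\emptyset}^{k+1}}$ and $l \in \vinterp{\Delay\Str A}{\sigma''}{\set{\emptyset}^{k+1}}$. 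As $A$ is a value type, \autoref{lem:vinterp_value}(\ref{item:vinterp_value1}) gives $\hastype{}{v_1}{A}$, supplying the required output value $v := v_1$. Writing $\set{\emptyset}^{k+1} = \set{\emptyset};\set{\emptyset}^k$ and applying the $\Delay$-clause with the unique heap $\emptyset \in \set{\emptyset}$, I obtain that $l$ is a heap location with $\sigma''(l) = \eta_L(l) \in \einterp{\Str A}{\trim{\sigma''}\tick}{\set{\emptyset}^k}$, where $\trim{\sigma''} = \lock\eta_L$.

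Next, the stream rule of \autoref{fig:stream_machine} applies to $\hevalSingle{t}{\lock\eta\tick}{v_1 :: l}{\lock\eta_N\tick\eta_L}$, yielding $\state{t}{\eta} \forwards{v_1} \state{\adv\,l}{\eta_L}$; so I take $t' := \adv\,l$ and $\eta' := \eta_L$. It then remains to show $\state{\adv\,l}{\eta_L} \in S_k(A)$, i.e. $\adv\,l \in \einterp{\Str A}{\lock\eta_L\tick}{\set{\emptyset}^k}$, which I would derive from the membership $\eta_L(l) \in \einterp{\Str A}{\lock\eta_L\tick}{\set{\emptyset}^k}$ just obtained. Indeed, any world $\world{\sigma^*}{\ol H^*} \tickge \world{\lock\eta_L\tick}{\set{\emptyset}^k}$ must satisfy $\sigma^* = \lock\eta_N^*\tick\eta_L^*$ with $\eta_L \heaple \eta_N^*$ (the tick on the lower store rules out the second $\tickle$ clause), hence $\eta_N^*(l) = \eta_L(l)$; the $\adv$ rule evaluates $\adv\,l$ in $\sigma^*$ precisely by evaluating $\eta_N^*(l) = \eta_L(l)$, and instantiating the membership of $\eta_L(l)$ at this same world shows that this terminates in a value lying in the value relation at $\ol H^*$. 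Thus $\adv\,l$ meets the term relation at every such world, giving $\state{\adv\,l}{\eta_L} \in S_k(A)$.

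The hard part will be the bookkeeping of the last step, where advancing one time step must be shown sound. Concretely one must verify that (a) $\set{\emptyset}^{k+1}$ decomposes as $\set{\emptyset};\set{\emptyset}^k$ so the $\Delay$-clause peels off exactly one step while shortening the heap sequence; (b) the garbage collection $\trim{\sigma''} = \lock\eta_L$ discards the now-heap $\eta_N$ in exactly the way the stream rule does when it drops $\eta_N$ and retains $\eta_L$; and (c) $\adv\,l$ and the stored computation $\eta_L(l)$ have identical evaluation behaviour in every future store, which holds because $l$ already resides in the later-heap $\eta_L$ that becomes the new now-heap, so the lookups agree by $\eta_L \heaple \eta_N^*$. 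Everything else is a routine unfolding of the definitions of $\einterp{\cdot}{}{}$, $\vinterp{\cdot}{}{}$, and $\cinterp{\cdot}{}{}$.
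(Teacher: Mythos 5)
Your proof is correct and takes essentially the same approach as the paper's: part (i) via the fundamental property applied to $\hastype{\lock}{\unbox\,t}{\Str A}$, and part (ii) by instantiating the term relation at its own world, invoking \autoref{lem:machine_monotone} to get the two-heap form of the resulting store and \autoref{lem:vinterp_value} for $\hastype{}{v}{A}$, then peeling one step off $\set{\emptyset}^{k+1}$ via the $\Delay$-clause to conclude $\state{\adv\,l}{\eta_L} \in S_k(A)$. The only differences are presentational: you apply the fundamental property directly at the store $\lock\emptyset\tick$ rather than at $\lock$ followed by $\tickle$-weakening, and you spell out the world-bookkeeping (that $\eta_N^*(l)=\eta_L(l)$ in any $\tickge$-larger world, so $\adv\,l$ and $\eta_L(l)$ evaluate identically) that the paper compresses into the single word ``Hence.''
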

\begin{proof}~
  \begin{enumerate}[(i)]
  \item $\hastype{}{t}{\Box(\Str A)}$ implies
    $\hastype{\lock}{\unbox\,t}{\Str A}$ which by \autoref{thr:lrl},
    implies that
    $\unbox\,t \in \einterp{\Str
      A}{\lock}{\set{\emptyset}^k}$ and thus also $\unbox\,t \in \einterp{\Str
      A}{\lock\tick}{\set{\emptyset}^k}$. Hence
    $\state{\unbox\,t}{\emptyset} \in S_k(A)$.
  \item Let $\pair{t}{\eta} \in S_{k+1}(A)$. Then
    $t \in \einterp{\Str A}{\lock \eta\tick}{\set{\emptyset}^{k+1}}$,
    which means that $\hevalSingle{t}{\lock \eta\tick}{w}{\sigma}$ and
    $w \in \vinterp{\Str A}{\sigma}{\set{\emptyset}^{k+1}}$. Hence,
    $w = v :: l$ with
    $v \in \vinterp{A}{\sigma}{\set{\emptyset}^{k+1}}$ and
    $l \in \vinterp{\Delay \Str
      A}{\sigma}{\set{\emptyset}^{k+1}}$. Moreover, by
    \autoref{lem:vinterp_value} $\hastype{}{v}{A}$ and by
    \autoref{lem:machine_monotone}
    $\sigma = \lock\eta_N\tick\eta_L$. Hence,
    $\adv\,l \in \einterp{\Str
      A}{\lock\eta_L\tick}{\set{\emptyset}^{k}}$. That is,
    $\state{t}{\eta} \forwards{v} \state{\adv\,l}{\eta_L}$ and
    $\state{\adv\,l}{\eta_L} \in T_k(A)$.
  \end{enumerate}
  \vspace{-1.5em}
\end{proof}

For the proof of causality we need the following property of the
operational semantics, which essentially states that we never read
from the `later' heap. 
\begin{lemma}
  \label{lem:dontTouchNow}
  If $\hevalSingle{t}{\sigma,l\mapsto u}{v}{\sigma',l\mapsto u}$, then 
  $\hevalSingle{t}{\sigma,l\mapsto u'}{v}{\sigma',l\mapsto u'}$ for
  any $u'$.
\end{lemma}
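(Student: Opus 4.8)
The plan is to prove the statement by induction on the derivation of $\hevalSingle{t}{\sigma,l\mapsto u}{v}{\sigma',l\mapsto u}$, guided by the observation that the big-step machine of \autoref{fig:machine} never \emph{reads} from the later heap: the only rule that dereferences a heap location is the one for $\adv$, and there the lookup $\eta_N'(l_0)$ is performed in the now heap. The later heap is only ever \emph{extended} (by $\delay$ and by the $\fix$-unfolding case of $\unbox$) and is otherwise carried along unchanged. Consequently, overwriting the content stored at a later-heap location cannot influence which value is produced, and it propagates verbatim to the output store. To make the induction go through I would prove the slightly more uniform statement that for every derivation $\hevalSingle{t}{\sigma_1}{v}{\sigma_2}$ and every $l\in\dom{\later{\sigma_1}}$, replacing the content at $l$ by an arbitrary $u'$ yields $\hevalSingle{t}{\sigma_1'}{v}{\sigma_2'}$, where $\sigma_1'$ and $\sigma_2'$ are $\sigma_1,\sigma_2$ with that single mapping overwritten. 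The original statement is the instance $\sigma_1=\sigma,l\mapsto u$, noting that \autoref{lem:machine_monotone} guarantees $l$ survives into $\sigma_2$ with the same content $u$, so that indeed $\sigma_2=\sigma',l\mapsto u$.

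For the purely structural rules (pairs, injections, projections, $\case$, application, $+$, $\into$, $\out$) the store is threaded through the premises, and by \autoref{lem:machine_monotone} the later heap only grows, so $l$ remains present with content $u$ in every intermediate store; applying the induction hypothesis to each premise and composing the results — the overwritten output store of one premise is exactly the overwritten input store of the next — gives the conclusion. The rules whose premises descend into a store with no later copy of $l$, namely $\promote$ and the $\rbox$-case of $\unbox$, which evaluate a subterm in $\bot$, are immediate, since that subevaluation does not mention $l$ and is reused unchanged. For $\delay$ and the $\fix$-case of $\unbox$, which allocate a fresh location $l^\ast=\allocate{\sigma}$, I would use that $\allocate{\cdot}$ depends only on $\dom{\later{\sigma}}$: overwriting the content at the already-present location $l$ leaves this domain unchanged, so the same fresh $l^\ast$ is produced in both runs and moreover $l^\ast\neq l$; the newly written mapping is therefore identical and disjoint from $l$, and the induction hypothesis handles the continuation.

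The main obstacle is the rule for $\adv$ (and, more simply, $\progress$), where the store changes shape. Its first premise evaluates $t$ in the single-heap store $\lock\eta_N$, from which the later heap $\eta_L$ containing $l$ has been dropped; the second premise then evaluates $\eta_N'(l_0)$ in $\lock\eta_N'\tick\eta_L$. To close this case I would argue that $l$ is absent from $\lock\eta_N$, so that the first premise is literally the same derivation in both runs, whence $\eta_N'$ and $l_0$ are unchanged and the induction hypothesis applies to the second premise, where $l$ reappears in the later heap. This hinges on the invariant that the now- and later-heap domains are disjoint — equivalently, that locations produced by $\allocate{\cdot}$ are globally fresh — which guarantees both $l\notin\dom{\eta_N}$ and $l\neq l_0$, so the sole dereference performed by the semantics misses $l$. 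Establishing and relying on this freshness invariant is the only delicate point; once it is in place, reassembling the modified first and second premises reconstitutes the $\adv$ derivation over the overwritten store, completing the induction.
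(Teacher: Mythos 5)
Your overall strategy is the right one, and it matches what the paper intends: the paper gives no explicit derivation-level proof of this lemma, but immediately after stating it introduces exactly the assumption your allocation cases need, namely that $\allocate{\sigma}$ depends only on $\dom{\later{\sigma}}$. Your strengthened induction hypothesis, the use of \autoref{lem:machine_monotone} to see that the mapping $l\mapsto u$ persists (unchanged) through every intermediate store of the structural rules, and the observation that the $\bot$-premises of $\promote$ and of the $\rbox$-case of $\unbox$ are untouched, are all correct, as is your use of the $\allocate{\cdot}$ assumption in the $\delay$ and fixed-point cases (where $\allocate{\sigma}\notin\dom{\later{\sigma}}$ also gives the needed $l^*\neq l$).

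The genuine flaw is your diagnosis of the $\adv$ case. You claim it \emph{hinges on} a global-freshness invariant guaranteeing $\dom{\eta_N}\cap\dom{\eta_L}=\emptyset$, and that establishing this invariant is the delicate point. That invariant is neither available nor needed. It is not available: the paper's specification of $\allocate{\cdot}$ only requires the returned location to be fresh for the \emph{later} heap, so in a two-heap store $\lock\eta_N\tick\eta_L$ the machine may well allocate a later-heap location whose name already occurs in $\dom{\eta_N}$; nothing in the paper rules out such collisions, so a proof that truly required disjointness would be stuck at this point. It is also not needed: the two heaps are separate finite maps, and your overwriting operation touches only the later heap. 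The first premise of the $\adv$ rule evaluates $t$ in the store $\lock\eta_N$, which does not contain the later heap at all, so that premise is literally the same derivation in both runs regardless of whether the name $l$ occurs in $\dom{\eta_N}$; likewise the sole dereference $\eta_N'(l_0)$ is a lookup in the now heap, which your modification never alters, even if $l_0$ and $l$ coincide as names. The case therefore closes by exactly the reassembly you describe, with no auxiliary invariant, and the same remark disposes of $\progress$. With that correction, the only assumption your induction genuinely uses is the one about $\allocate{\cdot}$ that the paper supplies.
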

To prove the above lemma we have to make the following reasonable
assumption about the function $\allocate{\cdot}$ that performs the
allocation of fresh heap locations: Given two stores $\sigma, \sigma'$
with $\dom{\later{\sigma}} = \dom{\later{\sigma'}}$, we have that
$\allocate{\sigma} = \allocate{\sigma'}$. In other words,
$\allocate{\sigma}$ only depends on the domain of the `later' heap. For
example, if the set of heap locations is just $\nats$, then
$\allocate{\sigma}$ could be implemented as the smallest heap location
that is fresh for the `later' heap of $\sigma$.

Analogously to the family of sets $S_k(A)$, we construct a family of
sets $T_k(A,B)$ that contains all states which are safe to run for $k$
more steps on the stream transducer semantics:
\[
  T_k(A,B) = \setcom{\pair{t}{\eta}} {\thel \nin \dom{\eta}
    \land \forall v,w \in \vinterp{A}{\bot}{()}. t \in \einterp{\Str B}{\lock
      \eta,\thel \mapsto v :: \thel\tick \thel \mapsto w
      :: \thel}{H^k(A)}}
\]
where
$H(A) = \setcom{\thel \mapsto v :: \thel}{v \in
  \vinterp{A}{\bot}{()}}$ and $H^k(A)$ is the sequence of $k$ copies
of $H(A)$.

Finally, we give the proof of causality:
\begin{proof}[Proof of \autoref{thr:causality}]~
    \begin{enumerate}[(i)]
    \item Given $\hastype{}{t}{\Box(\Str A \to \Str B)}$ and
      $v,v' \in \vinterp{A}{\bot}{()}$, we need to show that
      $\unbox\, t\,(\adv\, \thel) \in \einterp{\Str B}{\lock \thel
        \mapsto v :: \thel \tick \thel \mapsto v' :: \thel}{H(A)^k}$.
      By induction on $k+1$ we can show that
      $\thel \in \vinterp{\Delay\Str A}{\lock \thel \mapsto v ::
        \thel}{H^{k+1}(A)}$. By definition of the value relation, this
      means that
      $v :: \thel \in \einterp{\Str A}{\lock \thel \mapsto v ::
        \thel\tick \thel \mapsto v' :: \thel}{H^k(A)}$, which in turn
      implies that
      $\adv\,\thel \in \einterp{\Str A}{\lock \thel \mapsto v ::
        \thel\tick \thel \mapsto v' :: \thel}{H^k(A)}$. Since
      $\hastype{}{t}{\Box(\Str A \to \Str B)}$, we know that
      $\hastype{\lock}{\unbox\, t}{\Str A \to \Str B}$.  Using
      \autoref{thr:lrl} we thus obtain that
      $\unbox\,t \in \einterp{\Str A \to \Str B}{\lock}{H^k(A)}$,
      which in turn implies that
      $\unbox\,t \in \einterp{\Str A \to \Str B}{\lock \thel \mapsto v
        :: \thel \tick \thel \mapsto v' :: \thel}{H^k(A)}$. Therefore,
      we have that
      $\unbox\, t\,(\adv\, \thel) \in \einterp{\Str B}{\lock \thel
        \mapsto v :: \thel \tick \thel \mapsto v' :: \thel}{H^k(A)}$.
    \item Let $\pair{t}{\eta} \in T_{k+1}(A,B)$ and
      $\hastype{}{v}{A}$. We need to find $l, \eta_N, \eta_L$, and
      $\hastype{}{v'}{B}$ such that
        \begin{align}
          &\hevalSingle{t}{\lock \eta , \thel\mapsto v :: \thel \tick
            \thel\mapsto \unit}{v' ::
            l}{\lock\eta_N\tick\eta_L,\thel\mapsto\unit}\tag{1}
          \label{eq:causality1}
          \\\text{ and }\quad
          &\adv\, l \in \einterp{\Str B}{\lock \eta_L,\thel\mapsto w ::
            \thel\tick\thel\mapsto w' :: \thel}{H^{k}(A)}
          \quad\text{ for all } w,w' \in \vinterp{A}{\bot}{()}.
          \tag{2}
          \label{eq:causality2}
        \end{align}
    By \autoref{lem:vinterp_value}~\ref{item:vinterp_value1},
    $v \in \vinterp{A}{\bot}{()}$, and by
    \autoref{lem:vinterp_value}~\ref{item:vinterp_value2}, there is
    some $w^* \in \vinterp{A}{\bot}{()}$. Since
    $t \in \einterp{\Str B}{\lock \eta,\thel \mapsto v :: \thel\tick
      \thel \mapsto w^* :: \thel}{H^{k+1}(A)}$, we have that
    \[
      \hevalSingle{t}{\lock \eta , \thel\mapsto v :: \thel \tick\thel\mapsto w^* :: \thel}{v''}{\sigma} \text{ and } v'' \in
      \vinterp{\Str B}{\sigma}{H^{k+1}(A)}
    \]
    Consequently, $v'' = v' :: l$ for some
    $v'\in \vinterp{B}{\sigma}{H^{k+1}(A)}$ by
    \autoref{lem:machine_monotone}, $\sigma$ is of the form
    $\lock\eta_N\tick\eta_L,\thel\mapsto w^* :: \thel$.  By
    \autoref{lem:dontTouchNow} and \autoref{lem:vinterp_value}~\ref{item:vinterp_value1}, we
    then have \eqref{eq:causality1} and $\hastype{}{v'}{B}$,
    respectively.
    
    Finally, to prove \eqref{eq:causality2}, we assume
    $w,w' \in \vinterp{A}{\bot}{()}$ and show
    $\adv\, l \in \einterp{\Str B}{\lock \eta_L,\thel\mapsto w ::
      \thel\tick\thel\mapsto w' :: \thel}{H^{k}(A)}$.  Since
    $t \in \einterp{\Str B}{\lock \eta,\thel \mapsto v :: \thel\tick
      \thel \mapsto w :: \thel}{H^{k+1}(A)}$, we have that
    \[
      \hevalSingle{t}{\lock \eta , \thel\mapsto v :: \thel \tick\thel\mapsto w :: \thel}{v'''}{\sigma'} \text{ and } v''' \in
      \vinterp{\Str B}{\sigma'}{H^{k+1}(A)}
    \]
    By \autoref{lem:dontTouchNow} and \autoref{prop:determ} we thus
    know that $v''' = v' :: l$ and
    $\sigma' = \lock\eta_N\tick\eta_L,\thel\mapsto w ::
    \thel$. Consequently,
    $\sigma'(l) \in \einterp{\Str B}{\lock\eta_L,\thel\mapsto w ::
      \thel\tick\thel\mapsto w' :: \thel}{H^{k}}$, which implies that
    \[
      \adv\, l \in \einterp{\Str B}{\lock\eta_L,\thel\mapsto w ::
        \thel\tick\thel\mapsto w' :: \thel}{H^{k}(A)}.
    \]
  \end{enumerate}
\end{proof}

\section{Related Work}
\label{sec:related-work}

The central ideas of functional reactive programming were originally
developed for the language Fran~\cite{FRAN} for reactive
animation. These ideas have since been developed into general purpose
libraries for reactive programming, most prominently the Yampa
library~\citep{nilsson2002} for Haskell, which has been used in a
variety of applications including games, robotics, vision, GUIs, and
sound synthesis. Some of these libraries use a continuous notion of
time, allowing e.g., integrals over input data to be computed. A
continuous notion of time can be encoded in Simply RaTT as well given
that the language is extended with a type $\Time$ that suitably
represents positive time intervals (e.g., floating-point numbers). For
example, a Yampa-style signal function type from $A$ to $B$ is thus
encoded as $\Box (\Str\Time \to \Str A \to \Str
B)$. This encoding reflects the (unoptimised) definition of
Yampa-style signal functions~\citep{nilsson2002}, which is a
coinductive type satisfying
$\sym{SF}\; A\; B \cong \Time \to A \to (B \times \sym{SF}\;A\;
B)$. We believe that it should be possible to implement a Yampa-style
FRP library in Simply RaTT, and \autoref{sec:stream_library} has some
examples of combinators similar to those found in Yampa. While some of
these combinators have stability constraints on types, we believe that
these constraints will always be satisfied in concrete applications.

Simply RaTT follows a \emph{pull}-based approach to FRP, which means
that the program is performing computations at every time step even if
no event occurred. \citet{push-pull} proposed an implementation of an
FRP library that combines \emph{pull}-based FRP with a
\emph{push}-based approach, where computation is only performed in
response to incoming events. Whereas a pull-based approach is
appropriate for example in games, which run at a fixed sampling rate,
a push-based approach is more efficient for applications like GUIs,
which often only need to react to events that occur infrequently.

The idea of using modal type operators for reactive programming goes
back at least to the independent works of \citet{jeffrey2012,krishnaswami2011ultrametric} and
\citet{jeltsch2013temporal}. One of the inspirations for \citet{jeffrey2012} was to
use linear temporal logic~\citep{ltl} as a programming language
through the Curry-Howard isomorphism. The work of Jeffrey and Jeltsch
has mostly been based on denotational semantics, and
\citet{krishnaswami2011ultrametric,
  krishnaswami2012higher,krishnaswami13frp,cave14fair} are the only
works to our knowledge giving operational guarantees. The work of
\citet{cave14fair} shows how one can encode notions of fairness in
modal FRP, if one replaces the guarded fixed point operator with more
standard (co)recursion for (co)inductive types.
 
The guarded recursive types and fixed point combinator originate with \citet{nakano2000},
but have since been used for constructing logics for reasoning about advanced programming languages
~\cite{ToT} using an abstract form of step-indexing~\cite{appel01indexed}. The Fitch-style approach
to modal types~\citep{Fitch:Symbolic} has been used for guarded recursion in Clocked Type Theory~\cite{bahr2017clocks}, 
where contexts 
can contain multiple, named ticks. Ticks can be used for reasoning about 
guarded recursive programs. 
The denotational semantics of Clocked Type Theory~\cite{CloTTModel} reveals the difference
from the more standard two-context approaches to modal logics, such as Dual Intuitionistic Linear Logic
\cite{barber1996dual}: In the latter, the modal operator is implicitly applied to the type of all variables in one context,
in the Fitch-style, placing a tick in a context corresponds to applying a \emph{left adjoint} to the modal
operator to the context.

The previous work closest to the present work is that of \citet{krishnaswami13frp}. 
We have already compared to this several times above, but give a short summary
here. Simply RaTT is expressive enough to encompass all the positive examples
of Krishnaswami's calculus, but we go a step further and identify a source of time leaks
which allows us to eliminate in typing a number of leaking examples typable in Krishnaswami's
calculus including the $\Varid{leakyNats}$ example from the introduction, and $\Varid{scary\_const}$. 
One might claim that these are explicit leaks, but detecting them in the type system is 
a major step forward we believe. Note that the Fitch-style approach is a real shift in
approach: The time dependencies have changed, and Krishnaswami's context
of stable variables has been replaced by a context of initial variables. 
One difference between these is that variables can be introduced 
from Krishnaswami's stable context. In Simply RaTT, initial variables
can generally not be introduced into temporal judgements. We plan to
explain this change in terms of denotational semantics in future work.

Another approach to reactive programming is that of synchronous
dataflow languages.  Here the main abstraction is that of a ``logical
tick'' or synchronous abstraction.
This is the assumption that at each tick, the output is computed
instantaneously from the input.
%: We assume everything happens
%instantaneously within each time step, i.e., after receiving input
%from the environment, computing the output happens instantaneously.
This abstraction makes reasoning about time much easier than if we had
to consider both the reactive behaviour and the internal timing
behaviour of a program.  Of particular interest is the synchronous
dataflow language Lustre \cite{caspi1987lustre}.  Lustre is a
first-order language used for describing and verifying real-time
systems and is at the core of the SCADE industrial environment
\cite{scadeWebsiteBackground} which is used for critical control
systems in aerospace, rail transportation, industrial systems and
nuclear power plants \cite{scadeWebsiteSuccess}.  In
\autoref{sec:lustre}, we have shown how to encode some of the simpler
concepts of Lustre in Simply RaTT, and how the concept of a logical
tick fits well with the notion of stepwise stream unfolding. 

\section{Conclusions and Future Work}
\label{sec:concl-future-work}

We have presented the modal calculus Simply RaTT for reactive programming. Using
the Fitch-style approach to modal types this gives a significant simplification of
the type system and programming examples over existing approaches, in particular
the calculus of \citet{krishnaswami13frp}. Moreover, we have identified a source of 
time leaks and designed the type system to rule these out. 

In future work we aim to extend Simply RaTT to a full type theory with dependent 
types for expressing properties of programs. Before doing that, however, we would
like to extend Simply RaTT to encode fairness in types as in the work of \citet{cave14fair}.
This is not easy, since it requires a distinction between inductive and coinductive 
guarded types, but Nakano's fixed point combinator forces these to coincide.

%% Acknowledgments
\begin{acks}                     
This work was supported by a research grant (13156) from VILLUM FONDEN.
       %% acks environment is optional
                                        %% contents suppressed with 'anonymous'
  %% Commands \grantsponsor{<sponsorID>}{<name>}{<url>} and
  %% \grantnum[<url>]{<sponsorID>}{<number>} should be used to
  %% acknowledge financial support and will be used by metadata
  %% extraction tools.

  % This material is based upon work supported by the
  % \grantsponsor{GS100000001}{National Science
  %   Foundation}{http://dx.doi.org/10.13039/100000001} under Grant
  % No.~\grantnum{GS100000001}{nnnnnnn} and Grant
  % No.~\grantnum{GS100000001}{mmmmmmm}.  Any opinions, findings, and
  % conclusions or recommendations expressed in this material are those
  % of the author and do not necessarily reflect the views of the
  % National Science Foundation.
\end{acks}

%% Bibliography
\bibliography{paper}

%% Appendix
% \appendix
% \section{Appendix}

% Text of appendix \ldots

\end{document}